

\documentclass[11pt,a4paper]{scrartcl}

\usepackage{amsmath}
\usepackage{afterpage}
\usepackage{amsfonts}
\usepackage{adjustbox}
\usepackage[a4paper]{geometry}
\usepackage{subfigure}
\usepackage{color} 
\usepackage{xr}
\usepackage{hyperref}
\usepackage{natbib}

\usepackage{etoolbox}
\makeatletter
\patchcmd{\@makecaption}
  {\parbox}
  {\advance\@tempdima-\fontdimen2} 
  {}{}
\makeatother  

\newtheorem{theorem}{Theorem}[section]
\newtheorem{proposition}[theorem]{Proposition}
\newtheorem{remark}{Remark}[section]
\newtheorem{ass}{{Assumption $\mathcal{A}$.\negthickspace}}
\newtheorem{lemma}[theorem]{Lemma}

\newcommand{\e}{\exp}
\newcommand{\var}{\operatorname{Var}}
\newcommand{\E}{\operatorname{E}}
\newcommand{\cov}{\operatorname{Cov}}
\newcommand{\ls}{\leq}
\newcommand{\gs}{\geq}
\newcommand{\eps}{\epsilon}

\newcommand{\bY}{\mathbf{Y}}
\newcommand{\bZ}{\mathbf{Z}}

\newcommand{\setZ}{\mathbb{Z}}
\newcommand{\Z}{\bZ}
\newcommand{\xdownarrow}[1]{{\left\downarrow\vbox to #1{}\right.\kern-\nulldelimiterspace}}
\newcommand{\btheta}{\boldsymbol\theta}
\newcommand{\brho}{\boldsymbol\rho}
\newcommand{\ovunset}[3]{\overset{#2}{\underset{#3}{#1}}}
\newcommand{\deq}{\overset{\mathcal{D}}{=}}

\newenvironment{proof}[1][\negthickspace]{\textbf{Proof\thickspace#1.} }{\
\rule{0.5em}{0.5em}}

\newcommand{\cb}{ }

\begin{document}

\title{Beyond Whittle: Nonparametric correction of a
  parametric likelihood with a focus on Bayesian time series analysis}


\author{Claudia Kirch\footnote{Otto-von-Guericke University,
Magdeburg,
Germany, \href{mailto:claudia.kirch@ovgu.de}{claudia.kirch@ovgu.de}}, 
Matthew C. Edwards\footnote{The University of Auckland,
  Auckland,
  New Zealand and
  Carleton College, 
Northfield, 
Minnesota, 
USA, \href{mailto:matt.edwards@auckland.ac.nz}{matt.edwards@auckland.ac.nz}},\\ 
Alexander Meier\footnote{Otto-von-Guericke University,
Magdeburg,
Germany, \href{mailto:alexander.meier@ovgu.de}{alexander.meier@ovgu.de}} and 
Renate Meyer\footnote{The University of Auckland,
Auckland,
New Zealand, \href{mailto:meyer@stat.auckland.ac.nz}{meyer@stat.auckland.ac.nz}}}



\maketitle

\begin{abstract}
  \textbf{Abstract.}
  The Whittle likelihood is widely used for Bayesian nonparametric
  estimation of the spectral density of stationary time
  series. However, the loss of efficiency for non-Gaussian time series
  can be substantial.  On the other hand, parametric methods are more
  powerful if the model is well-specified, but may fail entirely
  otherwise.  Therefore, we suggest a nonparametric correction of a
  parametric likelihood taking advantage of the efficiency of
  parametric models while mitigating sensitivities through a
  nonparametric amendment.  Using a Bernstein-Dirichlet prior for the
  nonparametric spectral correction, we show posterior consistency and
  illustrate the performance of our procedure in a simulation study
  and with LIGO gravitational wave data.
\end{abstract}

  \section{Introduction}
Statistical models can be broadly classified into {\em parametric} and
{\em nonparametric} models.  Parametric models, indexed by a finite
dimensional set of parameters, are focused, easy to analyse and have
the big advantage that when correctly specified, they will be very
efficient and powerful.  However, they can be sensitive to
misspecifications and even mild deviations of the data from the
assumed parametric model can lead to unreliabilities of inference
procedures.  Nonparametric models, on the other hand, do not rely on
data belonging to any particular family of distributions. As they make
fewer assumptions, their applicability is much wider than that of
corresponding parametric methods. However, this generally comes at the
cost of reduced efficiency compared to parametric models.

Standard time series literature is dominated by parametric models like
autoregressive integrated moving average models \citep{box2013time},
the more recent autoregressive conditional heteroskedasticity models
for time-varying volatility \citep{Engle82,Bollerslev86}, state-space
\citep{Durbin01}, and Markov switching models \citep{Bauwens99}.  In
particular, Bayesian time series analysis \citep{Steel08} is
inherently parametric in that a completely specified likelihood
function is needed.  Nonetheless, the use of nonparametric techniques
has a long tradition in time series analysis. \citep{Schuster1898}
introduced the periodogram which may be regarded as the origin of
spectral analysis and a classical nonparametric tool for time series
analysis \citep{Haerdle97}. {\em Frequentist} time series analyses
especially use nonparametric methods \citep{FanYao03, Wasserman06}
including a variety of bootstrap methods, computer-intensive
resampling techniques initially introduced for independent data, that
have been taylored to and specifically developed for time series
\citep{Haerdle03, Kreiss11,Kreiss11a}. An important class of
nonparametric methods is based on frequency domain techniques, most
prominently smoothing the periodogram. These include a variety of
frequency domain bootstrap methods strongly related to the Whittle
likelihood \citep{hurvich1987frequency,franke1992bootstrapping,
  Kirch11, Kirch07, Kim13} and found important applications in a
variety of disciplines \citep{Hidalgo08, Costa13, Emmanoulopoulos13}.

Despite the fact that {\em nonparametric Bayesian} inference has been
rapidly expanding over the last decade, as reviewed by
\citet{Hjort10}, \citet{Mueller13}, and \citet{Walker13}, only very
few nonparametric Bayesian approaches to time series analysis have
been developed.  Most notably, \citet{CarterKohn97}, \citet{Gango98},
\citet{Liseo01}, \citet{Choudhuri04}, \citet{Hermansen08}, and
\citet{Chopin13} used the Whittle likelihood \citet{Whittle57} for
Bayesian modelling of the spectral density as the main nonparametric
characteristic of stationary time series.  The Whittle likelihood is
an approximation of the true likelihood. It is exact only for Gaussian
white noise.  However, even for non-Gaussian stationary time series
which are not completely specified by their first and second-order
structure, the Whittle likelihood results in asymptotically correct
statistical inference in many situations.  As shown in
\citet{Contreras06}, the loss of efficiency of the
nonparametric approach using the Whittle likelihood can be substantial
even in the Gaussian case for small samples if the autocorrelation of
the Gaussian process is high.

On the other hand, parametric methods are more powerful than
nonparametric methods if the observed time series is close to the
considered model class but fail if the model is misspecified. To
exploit the advantages of both parametric and nonparametric
approaches, the autoregressive-aided periodogram bootstrap has been
developed by \citet{Kreiss03} within the frequentist bootstrap world
of time series analysis. It fits a parametric working model to
generate periodogram ordinates that mimic the essential features of
the data and the weak dependence structure of the periodogram while a
nonparametric correction is used to capture features not represented
by the parametric fit.  This has been extended in various ways (see
\citet{Jentsch10, Jentsch12, Kreiss12}). Its main underlying idea is a
nonparametric correction of a parametric likelihood approximation. The
parametric model is used as a proxy for rough shape of the
autocorrelation structure as well as the dependency structure between
periodogram ordinates. Sensitivities with respect to the spectral
density are mitigated through a nonparametric amendment.  We propose
to use a similar nonparametrically corrected likelihood approximation
as a pseudo-likelihood in the Bayesian framework to compute the
pseudo-posterior distribution of the power spectral density (PSD) and
other parameters in time series models.  This will yield a
pseudo-likelihood that generalises the widely used Whittle likelihood
which, as we will show, can be regarded as a special case of a
nonparametrically corrected likelihood under the Gaussian i.i.d.\
working model. 
Software implementing the methodology is available in the \verb|R| 
package \verb|beyondWhittle|, which is available on
the Comprehensive R Archive Network (CRAN), see \cite{RbeyondWhittle}.

The paper is structured as follows: In Chapter~\ref{sec:likelihood},
we briefly revisit the Whittle likelihood and demonstrate that it is a
nonparametrically corrected likelihood, namely that of a Gaussian
i.i.d.\ working model.  Then, we extend this nonparametric correction
to a general parametric working model. The corresponding
pseudo-likelihood turns out to be equal to the true likelihood if the
parametric working model is correctly specified but also still yields
asymptotically unbiased periodogram ordinates if it is not correctly
specified.  In Chapter~\ref{sec:bayesian}, we propose a Bayesian
nonparametric approach to estimating the spectral density using the
pseudo-posterior distribution induced by the corrected likelihood of a
fixed parametric model.  We describe the Gibbs sampling implementation
for sampling from the pseudo-posterior. This nonparametric approach is
based on the Bernstein polynomial prior of \citet{Petrone99} and used
to estimate the spectral density via the Whittle likelihood in
\citet{Choudhuri04}.  We show posterior consistency of this approach
and discuss how to incorporate the parametric working model in the
Bayesian inference procedure.  Chapter~\ref{sec:simulations} gives
results from a simulation study, including case studies of sunspot
data, and gravitational wave data from the sixth science run of the
Laser Interferometric Gravitational Wave Observatory (LIGO).  This is
followed by discussion in Chapter~\ref{sec:summary}, which summarises
the findings and points to directions for future work. 
The proofs, the details about the Bayesian autoregressive sampler as well as 
some additional simulation results are deferred to the Appendices~\ref{sec:proofs} -- \ref{sec_appendixSims}.

\section{Likelihood approximation for time series}\label{sec:likelihood}
While the likelihood of a mean zero Gaussian time series is completely
characterised by its autocovariance function, its use for
nonparametric frequentist inference is limited as it requires
estimation in the space of positive definite covariance
functions. Similarly for nonparametric Bayesian inference, it
necessitates the specification of a prior on positive definite
autocovariance functions which is a formidable task.  A quick fix is
to use parametric models such as ARMA models with data-dependent order
selection, but these methods tend to produce biased results when the
ARMA approximation to the underlying time series is poor.
A preferable nonparametric route is to exploit
the correspondence of the autocovariance function and the spectral
density via the Wiener-Khinchin theorem and nonparametrically estimate
the spectral density. To this end, \citet{Whittle57} defined a
pseudo-likelihood, known as the Whittle likelihood, that directly
depends on the spectral density rather than the autocovariance
function and that gives a good approximation to the true Gaussian and
certain non-Gaussian likelihoods.  In the following subsection we will
revisit this approximate likelihood proposed by \citet{Whittle57},
before introducing a semiparametric approach which extends the Whittle
likelihood.

\subsection{Whittle likelihood revisited}
Assume that $\{Z_t \colon t=0, 1, \ldots\}$ is a real zero mean
stationary time series with absolutely summable autocovariance
function $\sum_{h\in \setZ}|\gamma(h)|<\infty$.  Under these
assumptions the spectral density of the time series exists and is
given by the Fourier transform (FT) of the autocovariance function
\[ 
  f(\lambda)=\frac{1}{2\pi}\sum_{k=-\infty}^{\infty} \gamma(k)\e^{-ik\lambda},\quad 0\ls \lambda\ls 2\pi.
\]
Consequently, there is a one-to-one-correspondence between the
autocovariance function and the spectral density, and estimation of
the spectral density is amenable to smoothing techniques. The idea
behind these smoothing techniques is the following observation, which
also gives rise to the so-called Whittle approximation of the
likelihood of a time series: Consider the periodogram of
$\bZ_n=(Z_1,\ldots,Z_n)^T$,
\[
	I_n(\lambda)=I_{n,\lambda}(\bZ_n)=\frac{1}{2\pi n}\left|\sum_{t=1}^nZ_t\e^{-it\lambda} \right|^2.
\]
The periodogram is given by the squared modulus of the discrete
Fourier coefficients, the Fourier transformed time series evaluated at
Fourier frequencies $\lambda_j=\frac{2\pi j}{n}$, for $
j=0,\ldots,N=\lfloor(n-1)/2 \rfloor$. It can be obtained by the
following transformation: Define for $j=1,\ldots,N$
\begin{align*}
	\mathbf{c}_j=\sqrt{2} \Re \mathbf{e}_j=\frac{1}{\sqrt{2}}(\mathbf{e}_j+\mathbf{e}_{n-j}),\qquad \mathbf{s}_j=\sqrt{2}\Im \mathbf{e}_j=\frac{1}{i\,\sqrt{2}}(\mathbf{e}_j-\mathbf{e}_{n-j}),
\end{align*}
where
\[
\mathbf{e}_j=n^{-1/2}(e_j,e_j^2,\ldots,e_j^n)^T, \qquad e_j=\exp(-2\pi i j/n),\qquad j=0,\ldots,N 
\]
and for $n$ even, $\mathbf{e}_{n/2}$ is  defined analogously.
Then, 
\begin{align*}
		F_n=\begin{cases}
		(\mathbf{e}_0,\mathbf{c}_1,\mathbf{s}_1,\ldots,\mathbf{c}_N,\mathbf{s}_N,\mathbf{e}_{n/2})^T, & n\text{ even},\\
		(\mathbf{e}_0,\mathbf{c}_1,\mathbf{s}_1,\ldots,\mathbf{c}_N,\mathbf{s}_N)^T, & n\text{ odd},
	\end{cases}
\end{align*}
is an orthonormal $n\times n$ matrix (cf. e.g.\
\citet{brockwell2009time}, paragraph 10.1).  Real- and imaginary parts
of the discrete Fourier coefficients are collected in the vector
\begin{align*}
{\tilde{\mathbf Z}}_n:=(\tilde{Z}_n(0),\ldots,\tilde{Z}_n(n-1))^T=F_n \bZ_n 
\end{align*}
and the periodogram can be written as
\begin{align}\label{eq_perio_four}
	&I_n(\lambda_j)=\frac{1}{4\pi}\left(\tilde{Z}_n^2(2j)+\tilde{Z}^2_n(2j-1)\right), \quad j=1,\ldots,N,\notag\\
	&I_n(\lambda_0)=\frac{1}{2\pi}\tilde{Z}_n^2(0),\qquad \text{as well as for } n\text{ even: } I_{n}(\lambda_{n/2})=\frac{1}{2\pi}\tilde{Z}_n^2(n-1).
\end{align}
It is well known that the periodograms evaluated at two different
Fourier frequencies are asymptotically independent and have an
asymptotic exponential distribution with mean equal to the spectral
density, a statement that remains true for non-Gaussian and even
non-linear time series \citet{Shao07}.  Similarly, the Fourier
coefficients $\boldsymbol{\tilde{Z}}_n$ are asymptotically independent
and normally distributed with variances equal to $2\pi$ times the
spectral density at the corresponding frequency. This result gives
rise to the following Whittle approximation in the frequency domain
\begin{equation}\label{eq_NormalFouriercoefficients}
	p_{\widetilde{\Z}_n}^W(\boldsymbol{\tilde{z}}_n|f) \propto\det(D_n)^{-1/2}\exp\left(-\frac 1 2 \boldsymbol{\tilde{z}}_n^TD_n^{-1}\boldsymbol{\tilde{z}}_n\right)
\end{equation}
by the likelihood of a Gaussian vector with diagonal covariance matrix
\begin{equation}\label{eq_def_Dn}
	D_n:=D_n(f):=2\pi\,\begin{cases}
	\text{diag}(f(0),f(\lambda_1),f(\lambda_1),\ldots,f(\lambda_N),f(\lambda_N),f(\lambda_{n/2}))& n \text{ even},\\
	\text{diag}(f(0),f(\lambda_1),f(\lambda_1),\ldots,f(\lambda_N),f(\lambda_N))& n \text{ odd}.
\end{cases}
\end{equation}
As explicitly shown in Appendix~\ref{sec:proofs},
this yields the famous Whittle
likelihood in the time domain via the transformation theorem
\begin{align}\label{eq_whittle}
	p^W_{\mathbf{Z}_n=F_n^T\boldsymbol{\tilde{Z}}_n}(\boldsymbol z_n|f) \propto\exp\left\{ - \frac 1 2 \sum_{j=0}^{n-1}\left( \log f(\lambda_j)+\frac{I_{n,\lambda_j}(\boldsymbol z_n)}{f(\lambda_j)} \right) \right\}
\end{align}
which provides an approximation of the true likelihood. It is exact
only for Gaussian white noise in which case
$f(\lambda_j)=\sigma^2/2\pi$. It has the advantage that it depends
directly on the spectral density in contrast to the true likelihood
that depends on $f$ indirectly via Wiener-Khinchin's
theorem. Sometimes, the summands corresponding to $j=0$ as well as
$j=n/2$ (the latter for $n$ even) are omitted in the likelihood
approximation. In fact, the term corresponding to $j=0$ contains the
sample mean (squared) while the term corresponding to $j=n/2$ gives
the alternating sample mean (squared). Both have somewhat different
statistical properties and usually need to be considered
separately. Furthermore, the first term is exactly zero if the methods
are applied to time series that have been centered first, while the
last one is approximately zero and asymptotically negligible (refer
also Remark~\ref{rem_firstlast}).

The density of~$F_n^TD_n^{1/2}F_n\Z_n$ under the i.i.d. standard
Gaussian working model is the Whittle likelihood.  It has two
potential sources of approximation errors: The first one is the
assumption of independence between Fourier coefficients which holds
only asymptotically but not exactly for a finite time series, the
second one is the Gaussianity assumption. In this paper, we restrict
our attention to the first problem, extending the proposed methods to
non-Gaussian situations will be a focus of future work. In fact, the
independence assumption leads to asymptotically consistent results for
Gaussian data.  But even for Gaussian data with relatively small
sample sizes and relatively strong correlation the loss of efficiency
of the nonparametric approach using the Whittle likelihood can be
substantial as shown in \citet{Contreras06} or by the simulation
results of \citet{Kreiss03}.

\subsection{Nonparametric likelihood correction}
The central idea in this work is to extend the Whittle likelihood by
proceeding from a certain parametric working model (with mean 0) for
$\Z_n$ rather than an i.i.d.\ standard Gaussian working model before
making a correction analogous to the Whittle correction in the
frequency domain.

To this end, we start with some parametric likelihood in the time
domain, such as e.g.\ obtained from an ARMA-model, that is believed to
be a reasonable approximation to the true time series. We denote the
spectral density that corresponds to this parametric working model by
$f_{\text{param}}(\cdot)$. If the model is misspecified, then this
spectral density is also wrong and needs to be corrected to obtain the
correct second-order dependence structure. To this end, we define a
correction matrix
\begin{align*}
	&C_n=C_n(f,f_{\text{param}})=C_n(c(\lambda))\qquad \left(c:=\frac{f}{f_{\text{param}}}\right)\\
	&=\begin{cases}
	\text{diag}\left(\frac{f(\lambda_0)}{f_{\text{param}}(\lambda_0)},\frac{f(\lambda_1)}{f_{\text{param}}(\lambda_1)},\frac{f(\lambda_1)}{f_{\text{param}}(\lambda_1)},\ldots,\frac{f(\lambda_N)}{f_{\text{param}}(\lambda_N)},\frac{f(\lambda_N)}{f_{\text{param}}(\lambda_N)},\frac{f(\lambda_{n/2})}{f_{\text{param}}(\lambda_{n/2})}\right)& n \text{ even},\\
	\text{diag}\left(\frac{f(\lambda_0)}{f_{\text{param}}(\lambda_0)},\frac{f(\lambda_1)}{f_{\text{param}}(\lambda_1)},\frac{f(\lambda_1)}{f_{\text{param}}(\lambda_1)},\ldots,\frac{f(\lambda_N)}{f_{\text{param}}(\lambda_N)},\frac{f(\lambda_N)}{f_{\text{param}}(\lambda_N)}\right)& n \text{ odd}.
\end{cases}
\end{align*}
This is analogous to the Whittle correction in the previous section
as, in particular, $D_n=C_n(f,f_{\text{ i.i.d.} N(0,1)})$ with $D_n$
as in \eqref{eq_def_Dn}. However, the corresponding periodogram
ordinates are no longer independent under this likelihood but instead
inherit the dependence structure from the original parametric model
(see Proposition \ref{cor_31} c). Such an approach in a bootstrap
context has been proposed and successfully applied by \citet{Kreiss03}
using an AR($p$) approximation.  This concept of a nonparametric
correction of a parametric time domain likelihood is illustrated in
the schematic diagram:
\[
\begin{array}{lcl}
\underline{\mbox{time domain}} & & \underline{\mbox{frequency domain}}\\
\Z_n \sim \text{parametric working model} &\mbox{ } \stackrel{\mbox{ $\operatorname{FT}$}}{\xrightarrow{\hspace*{2.5cm}}} \mbox{ }& F_n\Z_n\\
& & \xdownarrow{0.8cm} C_n(f,f_{\text{param}})\\
F_n^TC_n(f,f_{\text{param}})^{1/2}F_n\Z_n &\stackrel{\mbox{ $\operatorname{FT}^{-1}$}}{\xleftarrow{\hspace*{2.5cm}}}& C_n(f,f_{\text{param}})^{1/2}F_n\Z_n\\
\end{array}\]
As a result we obtain the following nonparametrically corrected likelihood
function under the parametric working model
\begin{align}\label{eq_corrected_param}
	p_{\text{param}}^C(\Z_n|f)\propto \det(C_n)^{-1/2}\; p_{\text{param}}(F_n^TC_n^{-\frac{1}{2}} F_n\Z_n),\end{align}
where $p_{\text{param}}$ denotes the parametric likelihood. 
\begin{remark}\label{rem_identifiability}
  Parametric models with a multiplicative scale parameter $\sigma\neq
  1$ yield the same corrected likelihood as the one with $\sigma=1$ ,
  i.e.\ if $\sigma \Z_n$ is used as working model this leads to the
  same corrected likelihood for all $\sigma>0$.  For instance, if the
  parametric model is given by i.i.d.\ $N(0,\sigma^2)$ random
  variables with $\sigma^2>0$ arbitrary, then the correction also
  results in the Whittle likelihood (for a proof we refer to
  Appendix~\ref{sec:proofs}).  
  Analogously, for linear models
  $Z_t=\sum_{l=-\infty}^{\infty}d_le_{t-l}$, $e_t\sim (0,\sigma^2)$,
  which includes the class of ARMA-models, the corrected likelihood is
  independent of~$\sigma^2$.
\end{remark}

We can now prove the following proposition which shows two important
things: First, the corrected likelihood is the exact likelihood in
case the parametric model is correct.  Second, the periodograms
associated with this likelihood are asymptotically unbiased for the
true spectral density regardless of whether the parametric model is
true.

\begin{proposition}\label{cor_31}
  Let $\{Z_t\}$ be a real zero mean stationary time series with
  absolutely summable autocovariance function
  $\sum_{h\in\mathbb{Z}}|\gamma(h)|<\infty$ and let
  $f_{\text{param}}(\lambda)\gs \beta >0$ for $0\ls \lambda\ls \pi$ be
  the spectral density associated with the (mean zero) parametric
  model used for the correction.
	\begin{enumerate}
			\item If $f=f_{\text{param}}$, then $p_{\text{param}}^C=p_{\text{param}}$.
		\item The periodogram associated with the corrected likelihood is asymptotically unbiased for the true spectral density, i.e.\
			\begin{align*}
				\E_{p_{\text{param}}^C} I_{n,\lambda_j}(\Z_n)&=\int I_{n,\lambda_j}(z_1,\ldots,z_n)\,d p_{\text{param}}^C(z_n,\ldots,z_n)\\
				&{=\frac{f(\lambda_j)}{f_{\text{param}}(\lambda_j)} \E_{p_{\text{param}}}I_{n,\lambda_j}(\Z_n)}=f(\lambda_j)+o(1),
			\end{align*}
			where the convergence is uniform in $j=0,\ldots, \lfloor (n-1)/2\rfloor$. Furthermore, 
				\begin{align*}
					\cov_{p_{\text{param}}^C}(I_{n,\lambda_l}(\Z_n),I_{n,\lambda_k}(\Z_n))=\frac{f(\lambda_l)f(\lambda_k)}{f_{\text{param}}(\lambda_l)\,f_{\text{param}}(\lambda_k)}\,\cov_{p_{\text{param}}}(I_{n,\lambda_l}(\Z_n),I_{n,\lambda_k}(\Z_n)).
				\end{align*}
	\end{enumerate}
      \end{proposition}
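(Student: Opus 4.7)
The plan is to first establish a clean generative description of the corrected law and then reduce both parts to direct consequences. Since $F_n$ is real orthogonal and $C_n$ is diagonal, the linear map $\bY_n \mapsto F_n^T C_n^{1/2} F_n \bY_n$ has Jacobian $\det(C_n)^{1/2}$, and combining this with the transformation theorem shows that the prefactor $\det(C_n)^{-1/2}$ in \eqref{eq_corrected_param} is exactly what is required to ensure
\[ \bY_n \sim p_{\text{param}} \;\Longrightarrow\; F_n^T C_n^{1/2} F_n \bY_n \sim p_{\text{param}}^C. \]
I would verify this representation first; everything else is then algebra on top of it.

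Part~(a) is immediate: when $f = f_{\text{param}}$ the ratio $c\equiv 1$ gives $C_n = I$, the map above collapses to $F_n^T F_n = I$, and $p_{\text{param}}^C = p_{\text{param}}$. For part~(b), I would apply the generative identity in the form $F_n \Z_n \deq C_n^{1/2} F_n \bY_n$ (obtained by left-multiplying by $F_n$ and using $F_n F_n^T = I$). Because $C_n$ is diagonal with $\sqrt{c(\lambda_j)}$ sitting on exactly the two coordinate positions that correspond, via \eqref{eq_perio_four}, to the $j$-th periodogram ordinate, one reads off the scalar identity
\[ I_{n,\lambda_j}(\Z_n) \deq c(\lambda_j)\, I_{n,\lambda_j}(\bY_n), \qquad j = 0,\ldots,\lfloor (n-1)/2\rfloor. \]
Taking expectations and using bilinearity of covariance under the respective distributions then yields the asserted mean identity and the covariance identity with prefactor $f(\lambda_l)f(\lambda_k)/(f_{\text{param}}(\lambda_l)f_{\text{param}}(\lambda_k))$.

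The only piece that is not purely algebraic is the final claim $\E_{p_{\text{param}}^C} I_{n,\lambda_j}(\Z_n) = f(\lambda_j) + o(1)$ uniformly in $j$. Using the mean identity and $c(\lambda_j) \ls \sup f/\beta < \infty$ (the boundedness of $f$ following from $\sum_h |\gamma(h)| < \infty$), this reduces to showing $\E_{p_{\text{param}}} I_{n,\lambda_j}(\bY_n) = f_{\text{param}}(\lambda_j) + o(1)$ uniformly in $j$. This is the classical uniform asymptotic unbiasedness of the periodogram for a stationary process with absolutely summable autocovariance: writing the bias as $\frac{1}{2\pi}\sum_{|h|<n}(1-|h|/n)\gamma_{\text{param}}(h)e^{-ih\lambda_j} - f_{\text{param}}(\lambda_j)$ and splitting the sum at an arbitrary fixed $H$ gives the uniform bound by dominated convergence. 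This standard spectral-bias lemma, applied to the parametric working model, is the sole technical ingredient of the proof and is where care is needed to obtain uniformity over Fourier frequencies; the rest of the argument is a direct consequence of the generative identity.
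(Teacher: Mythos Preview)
Your proof is correct and follows essentially the same route as the paper: establish the generative identity $\Z_n \deq F_n^T C_n^{1/2} F_n \bY_n$ with $\bY_n\sim p_{\text{param}}$ via the transformation theorem, read off the pointwise (in fact joint) scaling $I_{n,\lambda_j}(\Z_n)\deq c(\lambda_j)\,I_{n,\lambda_j}(\bY_n)$, and then invoke the uniform asymptotic unbiasedness of the periodogram under the parametric model. The only cosmetic difference is that the paper cites Proposition~10.3.1 in \citet{brockwell2009time} for the last step, whereas you sketch the Ces\`aro-sum argument directly; your sketch is fine, and your explicit remark that $c(\lambda_j)\le\sup f/\beta<\infty$ is exactly the observation the paper compresses into ``because $f_{\text{param}}$ is bounded from below''.
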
 {The proof shows that the vector of
        periodograms under the corrected likelihood has exactly the
        same distributional properties as the vector of the
        periodograms under the parametric likelihood multiplied with
        $f(\cdot)/f_{\text{param}}(\cdot)$. Hence, asymptotic
        properties as the ones derived in Theorem 10.3.2 in
        \citet{brockwell2009time} carry over with the appropriate
        multiplicative correction.}

      In the remainder of the paper we describe how to make use of
      this nonparametric correction in a Bayesian set-up.

\section{Bayesian semiparametric approach to time series analysis}\label{sec:bayesian}
To illustrate the Bayesian semiparametric approach and how to sample
from the pseudo-posterior distribution, in the following we restrict
our attention to an AR($p$) model as our parametric working model for
the time series, i.e.  $Z_i=\sum_{l=1}^pa_lZ_{i-l}+\eps_i$, where
$\{\eps_i\}$ are i.i.d. N($0,1)$ random variables with density denoted
by $\varphi(\cdot)$. Note that without loss of generality,
$\sigma^2=1$, cf.\ Remark~\ref{rem_identifiability}.  This yields the
parametric likelihood of our working model, depending on the order~$p
\geq 0$ and on the coefficients~$\mathbf a = (a_1, ..., a_p)$:
\begin{equation}\label{eq_param_dens}
  p_{\text{param}}(\Z_n|\mathbf a)
  \propto 
    p_{\text{param}}(\Z_p|\mathbf a)
    \prod_{i=p+1}^n \varphi\left(Z_i - \sum_{l=1}^pa_lZ_{i-l}\right)
\end{equation}
with spectral density
\begin{equation}\label{eq_def_sd_param}
  f_{\text{param}}(\lambda ; \mathbf a)=\frac{1}{2\pi}\left|1-\sum_{l=1}^p a_l e^{-il\lambda}\right|^{-2}.
\end{equation}
We assume the time series to be stationary and causal a priori.  Thus,
$\mathbf a$ is restricted such that~$\phi(z):=1-a_1z-...-a_pz^p$ has
no zeros inside the closed unit disc, c.f. Theorem 3.1.1. in
\citet{brockwell2009time}.  For now, we assume that the
parameters~$(p,\mathbf a)$ of the parametric working model are fixed
(and in practice set to Bayesian point estimates obtained from a
preceding parametric estimation step).  An extension to combine the
estimation of the parametric model with the nonparametric correction
will be detailed later in Section~\ref{sec_priorParametric}.

\subsection{Nonparametric prior for spectral density inference}\label{sec_bernsteinDirichlet}

For a Bayesian analysis using either the Whittle or nonparametrically
corrected likelihood, we need to specify a nonparametric prior
distribution for the spectral density. Here we employ the approach by
\citet{Choudhuri04} which is essentially based on the Bernstein
polynomial prior of \citet{Petrone99} as a nonparametric prior for a
probability density on the unit interval. We briefly describe the
prior specification and refer to \citet{Choudhuri04} for further
details.

In contrast to the approach in \citet{Choudhuri04}, we do not specify
a nonparametric prior distribution for the spectral
density~$f(\cdot)$, but for a pre-whitened version thereof,
incorporating the spectral density of the parametric working model
into the estimation.  To elaborate, for~$0\leq \eta \leq 1$, consider
the \emph{eta-damped correction function}
\begin{equation}\label{eq_etaCorrection}
c_{\eta}(\lambda)=  c_{\eta}(\lambda;{\mathbf a}) := f(\lambda) / f_{\text{param}}(\lambda;{\mathbf a})^\eta.
\end{equation}
This corresponds to a reparametrization of the likelihood~\eqref{eq_corrected_param} by replacing  
$C_n=C_n(c(\lambda))$ with $C_n=C_n( c_\eta(\lambda;{\mathbf
    a})f_{\text{param}}(\lambda; {\mathbf a})^{\eta-1} )$.

\begin{remark}\label{rem_etaCorrection}
  The parameter $\eta$ models the confidence in the parametric model:
  If $\eta$ is close to~$1$ and the model is well-specified,
  then~$c_\eta(\cdot)$ will be much smoother than the original
  spectral density, since $f_{\text{param}}(\cdot)$ already captures
  the prominent spectral peaks of the data very well.  As a
  consequence, nonparametric estimation of~$c_{\eta}(\cdot)$ should
  involve less effort than nonparametric estimation of~$f(\cdot)$
  itself.  This remains true in the misspecified case, as long as the
  parametric model does describe the essential features of the data
  sufficiently well in the sense that it captures at least the more
  prominent peaks.  However, it is possible that the parametric model
  introduces erroneous spectral peaks if the model is misspecified.
  In that case, $\eta$ close to zero ensures a damping of the model
  misspecification, such that nonparametric estimation
  of~$c_{\eta}(\cdot)$ should involve less effort than nonparametric
  estimation of~$f(\cdot)/f_{\text{param}}(\cdot)$.  The choice of
  $\eta$ will be detailed in Section~\ref{sec_priorParametric}, but
  for now, $\eta$ is assumed fixed.
\end{remark}
We reparametrise $c_{\eta}(\cdot)$ to a density function $q(\cdot)$ on
$[0,1]$ via $c_\eta(\pi\omega)=\tau q(\omega), 0\ls \omega\ls 1$ with
normalization constant $\tau=\int_0^1 c_\eta(\pi\omega)d\omega$. Thus,
a prior for $c_\eta(\cdot)$ may be specified by putting a Bernstein
polynomial prior on $q(\cdot)$ and then an independent
Inverse-Gamma$(\alpha_\tau,\beta_\tau)$ prior on $\tau$, its density
denoted by $p_\tau$.  The Bernstein polynomial prior of $q$ is
specified in a hierarchical way as follows:
\begin{enumerate}
\item $\displaystyle q(\omega|k,G)=\sum_{j=1}^k G\left(
    \frac{j-1}{k},\frac{j}{k}\right] \beta(\omega|j,k-j+1)$ where
  $G(u,v]=G(v)-G(u)$ for a distribution function $G$ and
  $\beta(\omega|l,m)=\frac{\Gamma(l+m)}{\Gamma(l)\Gamma(m)}\;
  \omega^{l-1}(1-\omega)^{m-1}$ is the beta density with parameters
  $l$ and $m$.
\item $G$ has a Dirichlet process distribution with base measure
  $\alpha=M G_0$, where $M>0$ is a constant and $G_0$ a distribution
  function with Lebesgue density $g_0$.
\item $k$ has a discrete distribution on the integers $k=1,2,\ldots$,
  independent of $G$, with probability function $p_k(k)\propto
  \exp(-\theta_k k\log(k))$. Note that smaller values of $k$ yield
  smoother densities.
\end{enumerate}
Furthermore, we achieve an approximate finite-dimensional
characterization of this nonparametric prior in terms of $2L+3$
parameters $(V_1,\ldots,V_L,W_0,W_1,\ldots,W_L,k,\tau)$ by employing
the truncated Sethuraman (1994) representation of the Dirichlet
process
\[G=\sum_{l=1}^L p_l \delta_{W_l} +(1-p_1-\ldots -p_L)\delta_{W_0}\]
with $p_1=V_1$, $p_l=\left(\prod_{j=1}^{l-1}(1-V_{j})\right)V_l$ for
$l\gs 2$, $V_l\sim \mbox{beta}(1,M)$, and $W_l\sim G_0$, all
independent.  This gives a prior finite mixture representation of the
eta-damped correction
\begin{equation}\label{eq_pirormixture}
  c_\eta(\pi\omega) = \tau \sum_{j=1}^k \tilde{w}_{j,k} \beta(\omega|j,k-j+1),
\end{equation}
where $\tilde{w}_{j,k}=\sum_{l=0}^L p_l I\{\frac{j-1}{k}<W_l\ls \frac{j}{k}\}$ and $p_0=1-\sum_{l=1}^L p_l$.

The joint prior density of~$c_\eta$ by means of this
finite-dimensional approximation can be written as
\begin{equation}\label{eq_prior}
  p(V_1,\ldots,V_L,W_0,W_1,\ldots,W_L,k,\tau)\propto \left(\prod_{l=1}^L M(1-V_l)^{M-1}\right)
  \left(\prod_{l=0}^L g_0(W_l)\right) p_k(k)\,p_\tau(\tau).
\end{equation} 

Here, we specify a diffuse prior by choosing the uniform distribution
for $G_0$ and $M=1$. We set $\theta_k=0.01$,
$\alpha_\tau=\beta_\tau=0.001$ and follow the recommendation by
\citet{Choudhuri04} for the truncation point $L=\max\{20,n^{1/3}\}$.

\subsection{Posterior computation}\label{sec_posteriorComputation}
The prior~\eqref{eq_pirormixture} on $c_\eta(\cdot)$ induces a prior
on $f(\cdot)$ by multiplication with~$f_{\text{param}}(\cdot;\mathbf
a)^\eta$, see \eqref{eq_etaCorrection}.  Accordingly, the
pseudo-posterior distribution of $f(\cdot)$ can be computed as prior
times the corrected parametric likelihood:
\begin{eqnarray*}
&&p_{\text{post}}^{C}(V_1,\ldots,V_L,W_0,W_1,\ldots,W_L,k,\tau|\mathbf{Z}_n,\mathbf a, \eta)\\
&&\qquad\propto p(V_1,\ldots,V_L,W_0,W_1,\ldots,W_L,k,\tau) \det(C_n)^{-1/2} p_{\text{param}}(F_n^TC_n^{-1/2}F_n\mathbf{Z}_n|\mathbf a),
\end{eqnarray*}
where~$C_n = C_n\left(c_\eta(\lambda;{\mathbf
    a})f_{\text{param}}(\lambda; {\mathbf a})^{\eta-1}\right)$
and~$f_{\text{param}}(\lambda; \mathbf a)$ as
in~\eqref{eq_def_sd_param}.  Samples from the pseudo-posterior
distribution can be obtained via Gibbs sampling following the steps
outlined in \citet{Choudhuri04}.  The full conditional for $k$ is
discrete and readily sampled, as is the conjugate full conditional of
$\tau$. We use the Metropolis algorithm to sample from each of the
full conditionals of $V_l$ and $W_l$ using the uniform proposal
density of \citet{Choudhuri04}.
\begin{remark}\label{rem_firstlast}
  As in \citet{Choudhuri04}, we omit the first and last terms in the
  corrected likelihood that correspond to $\lambda=0$ and
  $\lambda=n/2$ (and setting $c_\eta(0)=c_\eta(n/2)=0$ as well as
  $F_n\Z_n(0)=F_n\Z_n(n)=0$).  This is due to the role that the
  corresponding Fourier coefficients play (being equal to the sample
  mean respectively alternating sample mean), which typically requires
  a special treatment (see Proposition 10.3.1 and (10.4.7) in
  \citet{brockwell2009time}). For the application to spectral density
  estimation in this paper this leads to more stable statistical
  procedures irrespective of the true mean of the time
  series. However, in situations, where the time series is merely used
  as a nuisance parameter such as regression models, change point or
  unit-root testing, these coefficients should be included and the
  likelihood used for the time series $Z_t-\mu$, where $\mu$ is the
  mean (not the sample mean) of the time series.
\end{remark}

\subsection{Posterior consistency}
In this section, we will show consistency of the pseudo-posterior
distribution based on the Bernstein polynomial prior and the corrected
likelihood for a given working model under the same assumptions as
\citet{Choudhuri04}. Throughout the section, we will make the following
assumption:
\begin{ass}\label{ass_1}
	\begin{enumerate}
        \item Denote by $\gamma_{\text{param}}(\cdot)$ and
          $f_{\text{param}}(\cdot)$ the autocovariance function
          respectively spectral density of the parametric working
          model. Assume that
			\begin{align*}
				&		\sum_{h\in \mathbb{Z}}h^{\alpha}|\gamma_{\text{param}}(h)|<\infty \text{ for some }\alpha>1,\qquad\\
				&f_{\text{param}}(\lambda)\gs \beta>0 \text{ for some }\beta>0\quad\text{ and all }-\pi\ls \lambda\ls \pi. 
			\end{align*}
                      \item Let $\{Z_t\}$ be a stationary mean zero
                        Gaussian time series with autocovariance
                        function $\gamma_0(\cdot)$ and spectral
                        density $f_0(\cdot)$ fulfilling
	\begin{align*}
		&		\sum_{h\in \mathbb{Z}}h^{\alpha}|\gamma_0(h)|<\infty \text{ for some }\alpha>1,\qquad\\
		&f_0(\lambda)\gs \beta>0 \text{ for some }\beta>0\quad\text{ and all }-\pi\ls \lambda\ls \pi. 
			\end{align*}
                        Denote by~$p_{n,f_0}(\cdot)$ and~$P_{n,f_0}$
                        the density and the distribution of~$\Z_n =
                        (Z_1,\ldots,Z_n)$.
\end{enumerate}
\end{ass}

An important first observation is, that the corrected likelihood, the
Whittle likelihood as well as the true likelihood are all mutually
contiguous in the Gaussian case. This fact may also be of independent
interest:
\begin{theorem}\label{theorem_contiguity}
  Under Assumptions~$\mathcal{A}$.\ref{ass_1} the true density
  $p_{n,f_0}(\cdot)$, the Whittle likelihood $p^W(\cdot|f)$ given in
  \eqref{eq_whittle} as well as the corrected (Gaussian) parametric
  likelihood $p_{\text{param}}^C(\cdot|f)$ given
  in~\eqref{eq_corrected_param} are all mutually contiguous.
\end{theorem}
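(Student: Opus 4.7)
The three likelihoods are all centered multivariate Gaussians on $\mathbb{R}^n$: the true density is $N(0,T_n(f_0))$ where $T_n(g)$ denotes the Toeplitz matrix with symbol $g$; the Whittle likelihood~\eqref{eq_whittle} is $N(0,F_n^T D_n(f) F_n)$; and, applying the change-of-variables formula to~\eqref{eq_corrected_param} together with $F_n F_n^T = I$, the corrected (Gaussian) parametric likelihood is $N(0,\Sigma_n^C)$ with
\[
\Sigma_n^C \;=\; F_n^T\,C_n^{1/2}\, F_n\, T_n(f_{\text{param}})\, F_n^T\,C_n^{1/2}\, F_n,
\]
where $C_n = \mathrm{diag}(f(\lambda_j)/f_{\text{param}}(\lambda_j))$ with the appropriate repetitions; the normalizing constant in~\eqref{eq_corrected_param} is consistent with $\det(\Sigma_n^C)^{-1/2}$ since $\det(\Sigma_n^C) = \det(C_n)\det(T_n(f_{\text{param}}))$. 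I interpret the theorem as the assertion for all three measures evaluated at the common spectral density $f = f_0$; this is what makes the statement non-trivial, since two centered Gaussian sequences with genuinely different spectral symbols are in general not mutually contiguous.

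My plan is to use the standard criterion for contiguity of centered Gaussian sequences together with transitivity. Given $N(0,\Sigma_n)$ and $N(0,\Sigma_n')$, set $B_n := \Sigma_n^{-1/2}\Sigma_n'\Sigma_n^{-1/2}$ with eigenvalues $\mu_{n,1},\ldots,\mu_{n,n}$. Passing to the eigenbasis of $B_n$, the log-likelihood ratio becomes
\[
L_n \;=\; \sum_{j=1}^n\!\Bigl[-\tfrac12\log\mu_{n,j} + \tfrac12(1-\mu_{n,j}^{-1})\,Y_j^2\Bigr],\qquad Y_j \text{ i.i.d.\ } N(0,1),
\]
so that if the $\mu_{n,j}$ are uniformly bounded away from $0$ and $\infty$ and $\sum_j (\mu_{n,j}-1)^2 = O(1)$, the sequence $L_n$ is tight under either measure (by Taylor expansion its mean is $-\tfrac14\sum_j(\mu_{n,j}-1)^2+O(\sum_j|\mu_{n,j}-1|^3)$ and its variance is $\tfrac12\sum_j(1-\mu_{n,j}^{-1})^2$), and Le Cam's first lemma delivers mutual contiguity.

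It therefore suffices to verify this criterion for the two pairs $(p_{n,f_0},\,p^W(\cdot|f_0))$ and $(p^W(\cdot|f_0),\,p_{\text{param}}^C(\cdot|f_0))$. The first pair is precisely the contiguity result established by \citet{Choudhuri04} under Assumption~$\mathcal{A}$.\ref{ass_1}. For the second pair the key identity is the diagonal relation
\[
C_n^{1/2}\,D_n(f_{\text{param}})\,C_n^{1/2} \;=\; D_n(f_0),
\]
so that if the Whittle approximation $F_n T_n(f_{\text{param}}) F_n^T \approx D_n(f_{\text{param}})$ were exact, $\Sigma_n^C$ would equal $F_n^T D_n(f_0) F_n$, the Whittle covariance. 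Hence the discrepancy between $\Sigma_n^C$ and $\Sigma_n^W$ is driven entirely by the Toeplitz-to-diagonal error for $f_{\text{param}}$, which satisfies the same hypotheses as $f_0$ by part~(1) of Assumption~$\mathcal{A}$.\ref{ass_1}; consequently the bound used for the first pair applies verbatim. The two-sided congruence with $F_n^T C_n^{1/2} F_n$ preserves the required eigenvalue control because $C_n$ is diagonal with entries uniformly bounded away from $0$ and $\infty$.

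The main obstacle is the trace estimate $\sum_j(\mu_{n,j}-1)^2 = O(1)$ for the comparison between a positive Toeplitz matrix and its diagonal Whittle counterpart. This is classical Grenander--Szeg\H{o} spectral theory under $\sum_h |h|^{\alpha}|\gamma(h)|<\infty$ and is carried out in detail in the proof of the contiguity result of \citet{Choudhuri04}; once that estimate is in hand, the remaining steps---absorbing the orthogonal factor $F_n$ and the diagonal $C_n^{1/2}$, and chaining the two pairwise contiguities by transitivity---are routine.
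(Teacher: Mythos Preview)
Your proposal is correct and follows essentially the same route as the paper: reduce to the two pairs (true vs.\ Whittle, Whittle vs.\ corrected) by transitivity, quote \citet{Ghosal04} for the first pair, and for the second pair use the diagonal identity $C_n^{1/2}D_n(f_{\text{param}})C_n^{1/2}=D_n(f_0)$ to collapse the comparison to the Toeplitz-versus-diagonal discrepancy for $f_{\text{param}}$, which is handled by the same lemmas from \citet{Ghosal04}. The only cosmetic difference is that you package the tightness argument via the eigenvalue criterion $\sum_j(\mu_{n,j}-1)^2=O(1)$, whereas the paper computes mean and variance of the log-likelihood ratio directly via trace identities; these are the same computation.
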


With the help of this theorem we are now able to prove posterior
consistency under certain assumptions on the time series and prior.
\begin{theorem}\label{theorem:consistency-corrected}
  Let~$0 \leq \eta \leq 1$ fixed.  Let
  Assumptions~$\mathcal{A}$.\ref{ass_1} hold in addition to the
  following assumptions on the prior for~$c_\eta(\cdot)$:
\begin{itemize}
\item[(P1)] for all $k$, $0<p_k(k)\ls B\e^{-b\, k \log k}$ for some
  constants $B,b>0$,
\item[(P2)] $g_0$ is bounded, continuous and bounded away from 0,
\item[(P3)] the parameter $\tau$ is assumed fixed and known.
\end{itemize}
Let~$c_{0,\eta}(\lambda) = f_0(\lambda)/f_{\text{param}}^\eta(\lambda)$.
Then the posterior distribution is consistent, i.e.\ for any $\epsilon>0$,
\[\Pi_n(c: ||c-c_{0,\eta}||_1>\epsilon|\Z_n) \rightarrow 0\]
in $P_{n,f_0}$-probability, where $\Pi_n(\cdot|\Z_n)$ denotes the
pseudoposterior distribution computed using the corrected likelihood.
\end{theorem}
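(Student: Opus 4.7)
The plan is to apply Schwartz's posterior consistency theorem to the pseudo-posterior $\Pi_n(\cdot|\Z_n)$. This requires verifying (i) a Kullback--Leibler prior positivity condition: $\Pi(c : K_n(c_{0,\eta},c)/n<\epsilon)>0$ for every $\epsilon>0$, with $K_n$ denoting the KL divergence between the corrected Gaussian pseudo-likelihoods at $c_{0,\eta}$ and at $c$; and (ii) a uniformly exponentially consistent sequence of tests separating $c_{0,\eta}$ from $\{c:\|c-c_{0,\eta}\|_1>\epsilon\}$ under $p^C_{\text{param}}$. Once both hold, the pseudo-posterior is consistent in $P^C_{\text{param}}(\cdot|f_0)$-probability, and Theorem~\ref{theorem_contiguity} transfers the conclusion to $P_{n,f_0}$-probability via mutual contiguity.

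For (i), the first step is to invoke the full sup-norm support of the Bernstein--Dirichlet prior on continuous positive functions (see \citet{Petrone99}): combined with (P1)--(P2) this yields $\Pi(\|c-c_{0,\eta}\|_\infty<\delta)>0$ for every $\delta>0$. Since $f_{\text{param}}$ is bounded above and bounded away from zero by Assumption~$\mathcal A$.\ref{ass_1}, sup-norm closeness of $c$ to $c_{0,\eta}$ is equivalent to sup-norm closeness of $f = c\,f_{\text{param}}^{\eta}$ to $f_0$. The corrected pseudo-likelihood is a centered Gaussian density, so $K_n$ admits a closed form in terms of the associated covariance matrices; a Toeplitz--Szeg\H{o}-type computation as in \citet{Choudhuri04} --- or alternatively the analogous Whittle KL bound combined with Theorem~\ref{theorem_contiguity} --- then gives $K_n(c_{0,\eta},c)/n\to 0$ as $\|c-c_{0,\eta}\|_\infty\to 0$, which establishes KL support.

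For (ii), I follow the sieve construction of \citet{Choudhuri04}. Set $\mathcal F_n = \{c : k\le K_n\}$ with $K_n\asymp n/\log n$; by (P1), $\Pi(\mathcal F_n^c)$ is exponentially small in $n$. On $\mathcal F_n$ I cover the $c$-space by $L^1$-balls and build tests based on weighted sums of periodogram ordinates. Proposition~\ref{cor_31} supplies the first-moment identity $\E_{p^C_{\text{param}}} I_{n,\lambda_j} = c(\lambda_j)\,f_{\text{param}}(\lambda_j)^{\eta} + o(1)$ uniformly in $j$, while its covariance formula multiplies the known covariances under the parametric Gaussian working model by the bounded factor $f/f_{\text{param}}$ pointwise in frequency. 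Exponential test bounds then follow from standard quadratic-form tail inequalities for the centered Gaussian vector $F_n^T C_n^{1/2} F_n \Z_n$ whose covariance is well-conditioned.

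The main obstacle is Step (ii): under the corrected likelihood the periodograms are no longer asymptotically independent --- they inherit the AR$(p)$ covariance structure --- so the tests in \citet{Choudhuri04}, which implicitly exploit the asymptotic independence of the Whittle exponentials, cannot be quoted verbatim. Controlling the deviation of $\sum_j g(\lambda_j) I_{n,\lambda_j}$ under $p^C_{\text{param}}$ requires a Hanson--Wright-type bound for a Gaussian quadratic form with well-conditioned covariance, which is routine because $f_{\text{param}}$ is bounded above and bounded away from zero by Assumption~$\mathcal A$.\ref{ass_1}. The concluding metric equivalence $\|c-c_{0,\eta}\|_1 \asymp \|f-f_0\|_1$ and the contiguity transfer are immediate from Assumption~$\mathcal A$.\ref{ass_1} and Theorem~\ref{theorem_contiguity}.
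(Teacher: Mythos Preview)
Your architecture matches the paper's exactly: establish Schwartz-type conditions under the corrected Gaussian pseudo-likelihood $p^C_{\text{param}}(\cdot|f_0)$, then transfer the conclusion to $P_{n,f_0}$ via the mutual contiguity of Theorem~\ref{theorem_contiguity}. Two points of comparison are worth recording.

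For Step~(ii) the paper does \emph{not} use a Hanson--Wright bound on weighted periodogram sums. It proves a tailor-made replacement for Lemma~B.3 of \citet{Choudhuri04} (Lemma~\ref{lemma_b3} here): with $S_n=D_n(f_{\text{param}})^{-1/2}F_n\Sigma_{n,\text{param}}F_n^TD_n(f_{\text{param}})^{-1/2}$, the statistic $T_n=\Z_n^TF_n^TD_n(f_0)^{-1/2}S_n^{-1}D_n(f_0)^{-1/2}F_n\Z_n$ is \emph{exactly} $\chi^2_n$ under $c=c_{0,\eta}$, and an eigenvalue argument shows that under the one-sided alternative $c_1<c_{0,\eta}-\epsilon$ it is stochastically dominated by $(1-\epsilon/a)\chi^2_n$. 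Chernoff bounds on $\chi^2_n$ then give the exponential error rates directly, after which the proof can be concluded verbatim as in \citet{Choudhuri04}. Your Hanson--Wright route should also succeed since the covariances are well-conditioned under Assumption~$\mathcal A$.\ref{ass_1}, but the paper's whitening device is shorter and sidesteps any bookkeeping of operator and Frobenius norms uniformly over the alternative.

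One genuine omission in your outline: in the non-i.i.d.\ setting the Schwartz argument needs not only $K_n/n$ small but also a variance control $V_n(c_{0,\eta},c)/n^2\to 0$ on the KL-support set (condition~(C1)(a) of the paper's Theorem~\ref{theorem:consistency-general}); this is what guarantees the posterior denominator $\int \exp\{\log p_n(\Z_n|c)-\log p_n(\Z_n|c_{0,\eta})\}\,d\Pi(c)$ is not exponentially small. The paper obtains it from the same eigenvalue identity used for $K_n$, giving $V_n/n^2=O(n^{-1}\|c-c_{0,\eta}\|_\infty^2)$. Your sketch treats only the mean $K_n/n$; without the variance bound the Schwartz machinery does not close, so this step should be added. (The aside that the Whittle KL bound can be transported via Theorem~\ref{theorem_contiguity} is also not quite right: contiguity controls events of vanishing probability, not KL divergences.)
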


\subsection{Prior for the parameters of the working model}\label{sec_priorParametric}
In the previous sections, the parameters of the working model were
assumed to be fixed, as e.g. obtained in an initial pre-estimation
step.  From a Bayesian perspective, it is desirable to couple the
inference about the parametric working model with the nonparametric
correction, allowing for the inclusion of prior knowledge about the
model and for uncertainty quantification about the interaction of
model and correction.  Thus, for a fixed order~$p$, we include both
the autoregressive parameters~$\mathbf a = (a_1,\ldots,a_p)$ and the
spectral shape confidence~$\eta$ from~\eqref{eq_etaCorrection} into
the Bayesian inference.  {\cb The introduction of the parameter $\eta$  effectively robustifies the procedure in the sense that it guarantees
our method will not be worse than a corresponding fully nonparametric one. }

To ensure stationarity and causality (and
hence identifiability) of the parametric model, we put a prior on the
partial autocorrelations~$\brho = (\rho_1, \ldots, \rho_p)$
with $-1 < \rho_l < 1$ for $1 \leq l \leq p$.  The autoregressive
parameters~$\mathbf a = \mathbf a(\brho)$ can be readily
obtained from this parametrisation (see Appendix~\ref{sec_appendixSampling}).

We consider the following prior specification for the spectral density:
\begin{align*}
  f(\lambda)
  &= c_\eta(\lambda) f_{\text{param}}(\lambda; \brho)^\eta,
\end{align*}
with a Bernstein-Dirichlet prior on~$c_\eta(\cdot)$ as in
Section~\ref{sec_bernsteinDirichlet}, a uniform prior on $\eta$ and
uniform priors on the $\rho_l$'s, all a priori independent.  Of
course, it is possible to employ different prior models (see
\citet{liseo2013objective, sorbye2016penalised}).  In conjunction with
the corrected parametric likelihood, we obtain samples from the joint
pseudo-posterior distribution
\[
  p_{\text{post}}^{C}(v_1,\ldots,v_L,w_0,w_1,\ldots,w_L,k,\tau,\rho_1,\ldots,\rho_p,\eta|\mathbf{Z}_n)
\]
analogously to Section~\ref{sec_posteriorComputation} via Gibbs
sampling.  Note that, since the corrected parametric likelihood is the
Lebesgue density of a probability measure, it is sufficient that the
prior distributions are proper for the posterior distribution to be
proper.  We use random walk Metropolis-within-Gibbs steps with normal
proposal densities to sample from the full conditionals of~$\eta$
and~$\rho_1, ..., \rho_p$ respectively.  The proposal variance
for~$\eta$ is set to~0.01, where proposals larger than 1 (smaller than
0) are truncated at 1 (at 0).  To achieve proper mixing of the
parametric model parameters, the proposal variances~$\sigma_l^2$
for~$\rho_l$ are determined adaptively as described in
\citet{roberts2009examples} during the burn-in period, aiming for an
acceptance rate of~0.44, where proposals with absolute value larger or
equal to one are discarded.

\begin{remark}\label{rem_arOrder}
  The autoregressive order~$p$ is assumed to be fixed.  In our
  approach, it is determined in a preliminary model selection step.
  However, it is also possible to include the autoregressive order in
  the Bayesian inference by using a Reversible-jump Markov Chain Monte
  Carlo scheme \citet{green1995reversible} or stochastic search
  variables \citet{barnett1996bayesian}.
\end{remark}

\section{Numerical evaluation}\label{sec:simulations}
In this section, we evaluate the finite sample behavior of our
\emph{nonparametrically corrected (NPC)} approach to Bayesian spectral
density estimation numerically. To demonstrate the
trade-off between the parametric working model and the nonparametric
spectral correction, we compare our approach to both fully
parametric and fully nonparametric approaches.
We first present the results of a simulation study with ARMA data
in Section~\ref{sec:simArma} before considering sunspot data in
Section~\ref{sec:sunspot} and gravitational wave data in Section~\ref{sec:ligo}.
An implementation of all procedures presented below is provided
in the \verb|R| package \verb|beyondWhittle|, 
which is available on CRAN, see \cite{RbeyondWhittle}.

\subsection{Simulated ARMA data}\label{sec:simArma}
We consider data generated from the ARMA model
\begin{equation} \label{eq_model_ar1}
  Z_t = aZ_{t-1} + be_{t-1} + e_t, \quad 1 \leq t \leq n
\end{equation}
with standard Gaussian white noise~$e_t$ and different values of~$a,b$
and~$n$.  
The following competing approaches are compared with NPC:
\par
\emph{Nonparametric estimation (NP).}  The procedure from
\citet{Choudhuri04}, which is based on the Whittle likelihood and a
Bernstein-Dirichlet prior on the spectral density.  Note that this
coincides with the NPC approach with a white noise parametric working
model ($p=0$), c.f. Remark~\ref{rem_identifiability}.

\par
\emph{Autoregressive estimation (AR).}
For~$p=0,1,\ldots,p_{\text{max}}$, an autoregressive model of
order~$p$ is fitted to the data using a Bayesian approach with the
same partial autocorrelation parametrization and the same prior
assumptions as for the parametric working model within the
nonparametrically corrected likelihood procedure, see
Section~\ref{sec_priorParametric} (for details on the sampling scheme
we refer to Appendix~\ref{sec_appendixSampling}). The order~$p^*$
minimizing the DIC is then chosen for model comparison.

\par
The working model in the NPC approach is chosen to be the~AR($p^*$)
model from the AR procedure.  The prior for the working model
parameters is as described in Section~\ref{sec_priorParametric} and
the prior for the nonparametric correction is as described in
Section~\ref{sec_bernsteinDirichlet}.  For the NP approach, the same
Bernstein-Dirichlet prior for~$f(\cdot)$ is used as
for~$c_\eta(\cdot)$ in the NPC approach.

\par
We compare the average Integrated Absolute Error (aIAE) of
the posterior median spectral density estimate and the empirical
coverage probability of a Uniform Credible Interval (cUCI).  Note that
pointwise posterior credible intervals are not suited for
investigating coverage, since they do not take the multiple testing
problem into account that arises at different frequencies.  Following
\citet{haefnerKirch2016} (see also \citet{neumann1998simultaneous}), a
Uniform Credible Interval for the spectral density can be constructed
as follows: Denote by~$f^*_1(\cdot),\ldots, f^*_N(\cdot)$ the
posterior spectral density samples obtained by one of the procedures.
Then for~$0 < \alpha < 1$ the Uniform $\alpha$-Credible Interval is
given by
\[
  \left[
    \exp \left( \log f^*(\lambda) - C^*_\alpha  \sigma^*(\lambda) \right),
    \exp \left( \log f^*(\lambda) + C^*_\alpha  \sigma^*(\lambda) \right)
  \right]
\]
where~$f^*(\lambda)$ denotes the sample median at frequency~$0 \leq
\lambda \leq \pi$, $\sigma^*(\lambda)$ the median absolute deviation
of~$\log f^*_1(\lambda),\ldots,\log f^*_N(\lambda)$ and~$C^*_\alpha$
is chosen such that
\[
  \frac{1}{N} \sum_{j=1}^N 1 \left\{ 
    \max_{0 \leq \lambda \leq \pi} \frac{\log f^*(\lambda) - \log f^*_j(\lambda)}{\sigma^*(\lambda)} \leq C^*_\alpha
  \right\}
  \geq 1 - \alpha.
\]
The intervals are constructed on a logarithmic scale to ensure that
their covered range contains only positive values.  Because small
values of~$f^*$ lead to very large absolute values on a log scale, we
do not employ the usual logarithm, but the Fuller-logarithm as
described in~\citet{fuller1996introduction}, page 496, i.e.
\[
  \log_{\operatorname{Fuller}}(x) = \log(x + \xi) - \xi / (x + \xi),
\]
for some small~$\xi > 0$.  We use~$\alpha=0.9$ and $\xi=0.001$ in our
simulations.  
The chains were run for 12,000 iterations for AR (after a burn-in period of 8,000 iterations) and
for 20,000 iterations for NP and NPC (after a burn-in period of 30,000 iterations),
where a thinning of~4 was employed for NP and NPC.
We choose~$p_{\text max}=15$ and consider lengths~$n=64,128,256$ from
model~\eqref{eq_model_ar1} with~$N=1024$
replicates
($N$ a power of~2 to use the computational resources efficiently)
respectively.

\par
The results are shown in Table~1. 
For~AR(1) data, the AR
procedure yields the best results (in terms of both aIAE and cUCI),
whereas the NP performs worst.  It can be seen that AR and NPC benefit
from the well-specified parametric model.  For~MA(1) data, however,
the AR approach yields the worst results, whereas NPC benefits from
the nonparametric correction, yielding only slighly worse integrated
errors than NP, although with superior uniform credible intervals
for~$n \geq 128$.  In case of ARMA(1,1) data, the estimation does not
benefit from the autoregressive fit, i.e. the moving average
misspecification dominates the estimation.  Thus the results are
similar to the MA(1) case.  Further results for data from the ARMA
model can be found in Appendix~\ref{sec_appendixSims}.
\begin{table}
\caption{Average Integrated Absolute Error (aIAE), Uniform 0.9-Credible Interval coverage (cUCI) and
average posterior model confidence~$\hat \eta$ for
different realizations of model~\eqref{eq_model_ar1}.}\label{tab_iae1}
\begin{adjustbox}{max width=\linewidth}
\fbox{
\begin{tabular}{l|rrrlrrrlrrr}
&          \multicolumn{3}{c}{AR($1$): $a=0.95$}         & & \multicolumn{3}{c}{MA($1$): $b=0.8$}        & & \multicolumn{3}{c}{ARMA($1,1$): $a=0.75$, $b=0.8$} \\
\cline{2-4}\cline{6-8}\cline{10-12}
&   $n=64$  &  $n=128$  &  $n=256$    & & $n=64$  &  $n=128$  &  $n=256$   & & $n=64$  &  $n=128$ & $n=256$  \\
\hline 
aIAE &&&&&&&&&&&\\
AR&       2.661  &  2.101   &  1.600    & & 0.298  &  	0.244   &  0.192  & & 1.236  &  1.038  & 0.862  \\
NP&       3.543  &  2.946   &  2.370    & & 0.197  &  	0.153   &  0.119  & & 1.022  &  0.806  & 0.625  \\
NPC&      2.992  &  2.240   &  1.612    & & 0.206  &  	0.157   &  0.121  & & 1.083  &  0.907  & 0.727  \\
\hline
cUCI &&&&&&&&&&& \\
AR&       0.948  &  0.963   &  0.984    & & 0.876  &  	0.860   &  0.867  & & 0.866  &  0.846  & 0.891  \\
NP&       0.863  &  0.771   &  0.801    & & 0.998  &  	0.999   &  0.995  & & 0.953  &  0.919  & 0.906  \\
NPC&      0.952  &  0.973   &  0.996    & & 0.998  &  	1.000   &  1.000  & & 0.999  &  1.000  & 0.998  \\
\hline
$\hat \eta$ & 0.697  &  0.818   &  0.896    & & 0.285  &   0.191   &  0.121  & & 0.483  &  0.384  & 0.272
\end{tabular}}
\end{adjustbox}
\end{table}

\begin{figure}[b]
 \centering
  \subfigure[]{
    \centering
 	\includegraphics[width=.22\textwidth]{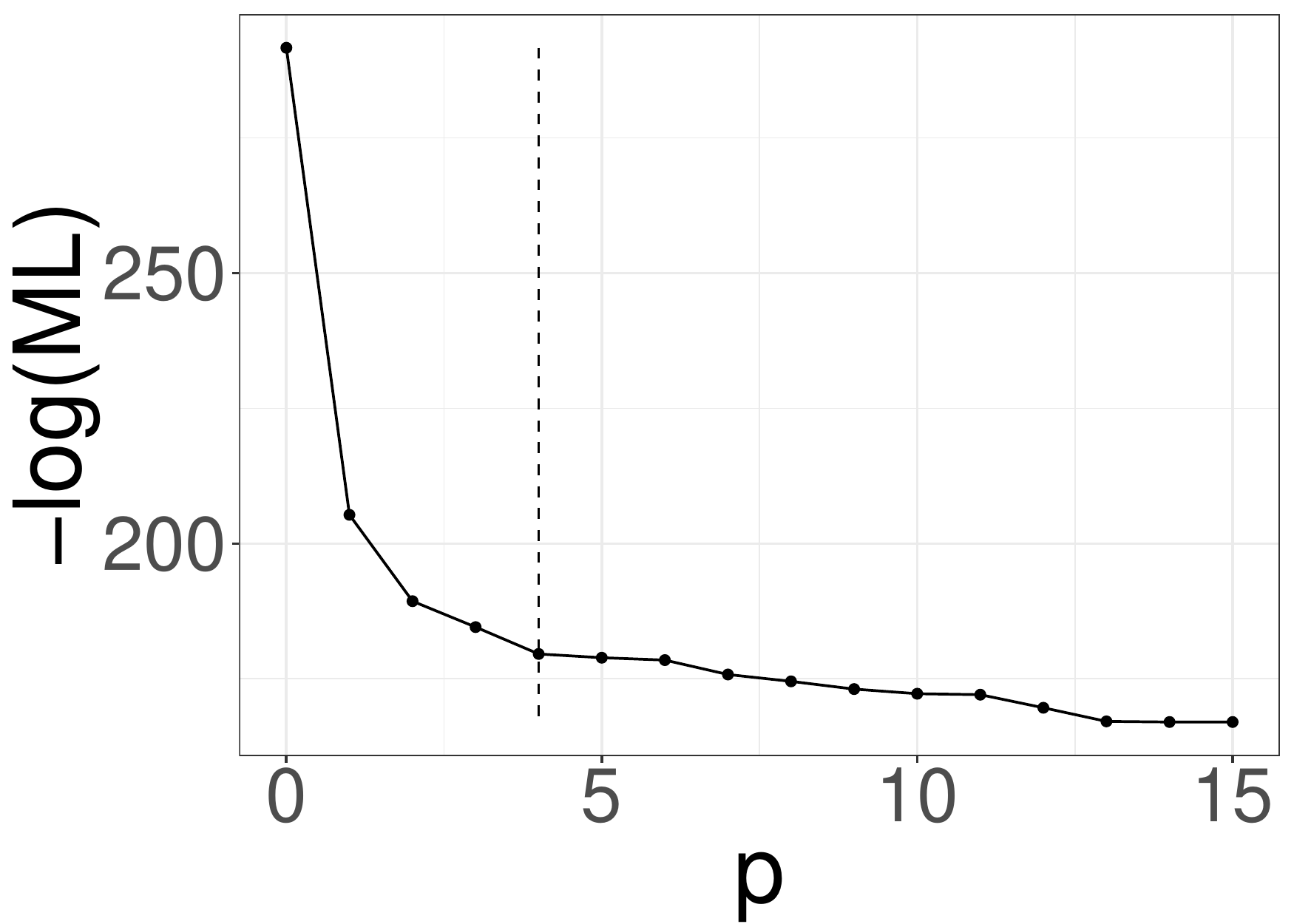}
  }
  \subfigure[]{
    \centering
 	\includegraphics[width=.22\textwidth]{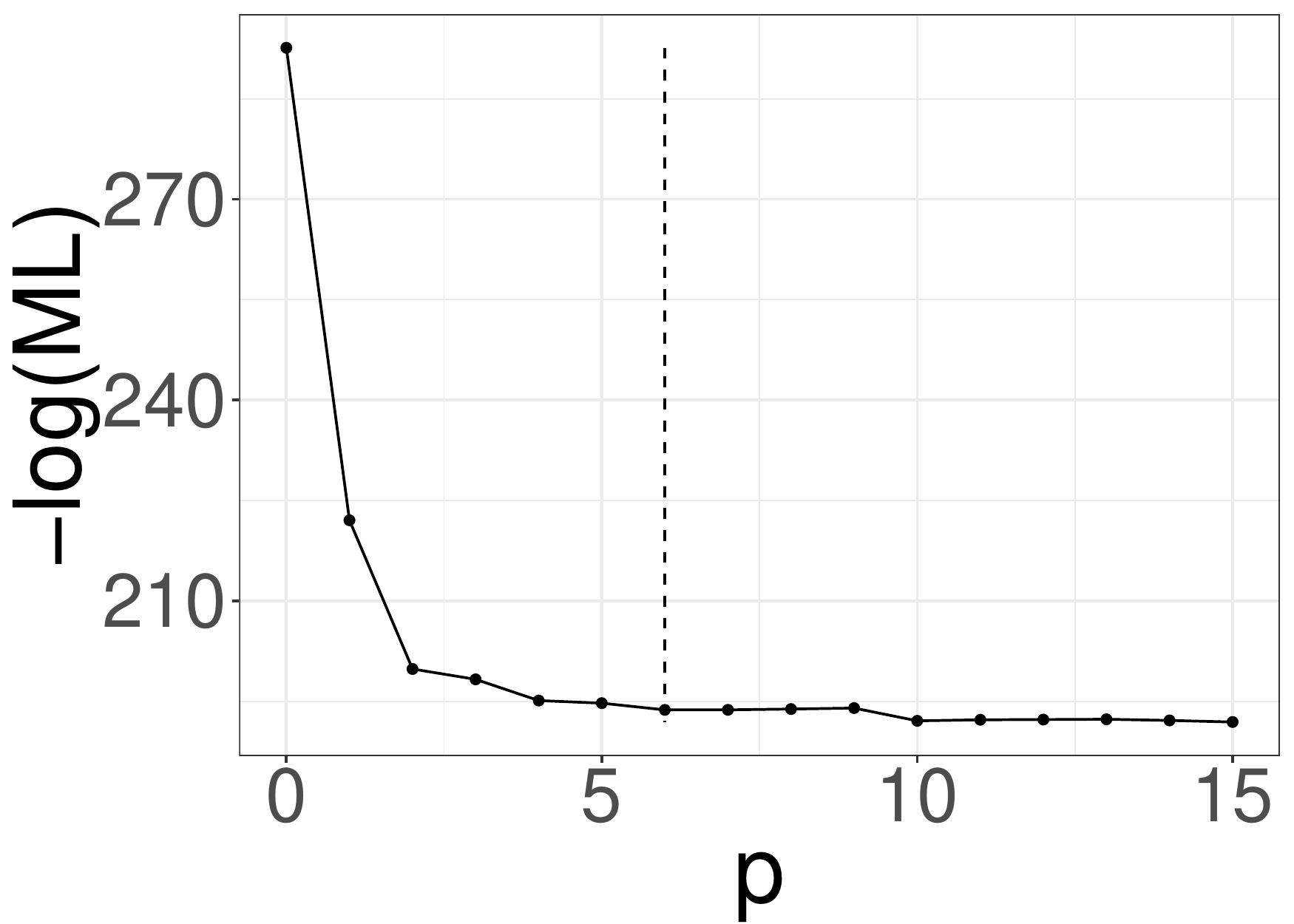}
  }
  \subfigure[]{
    \centering
 	\includegraphics[width=.22\textwidth]{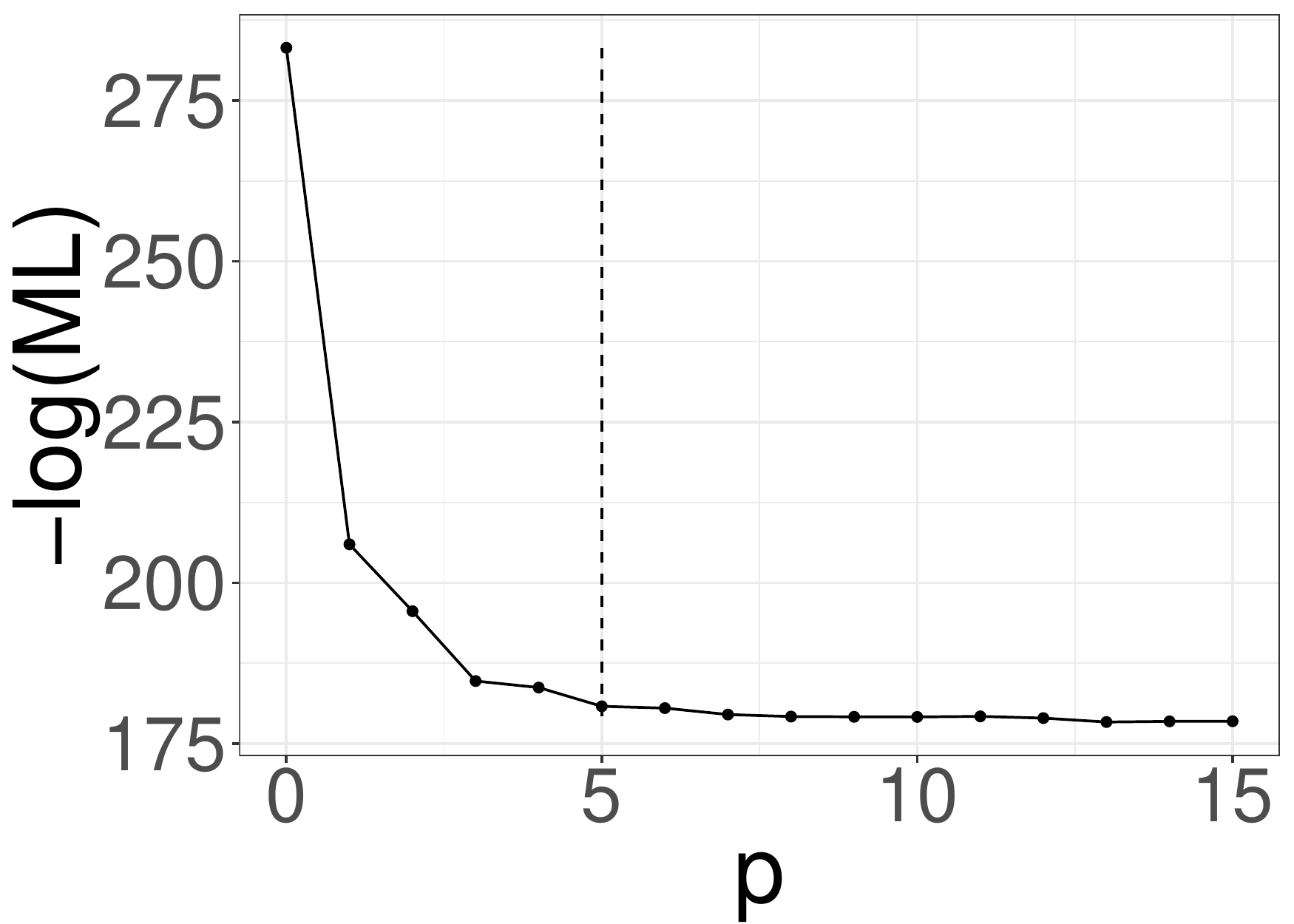}
  }
  \subfigure[]{
    \centering
 	\includegraphics[width=.22\textwidth]{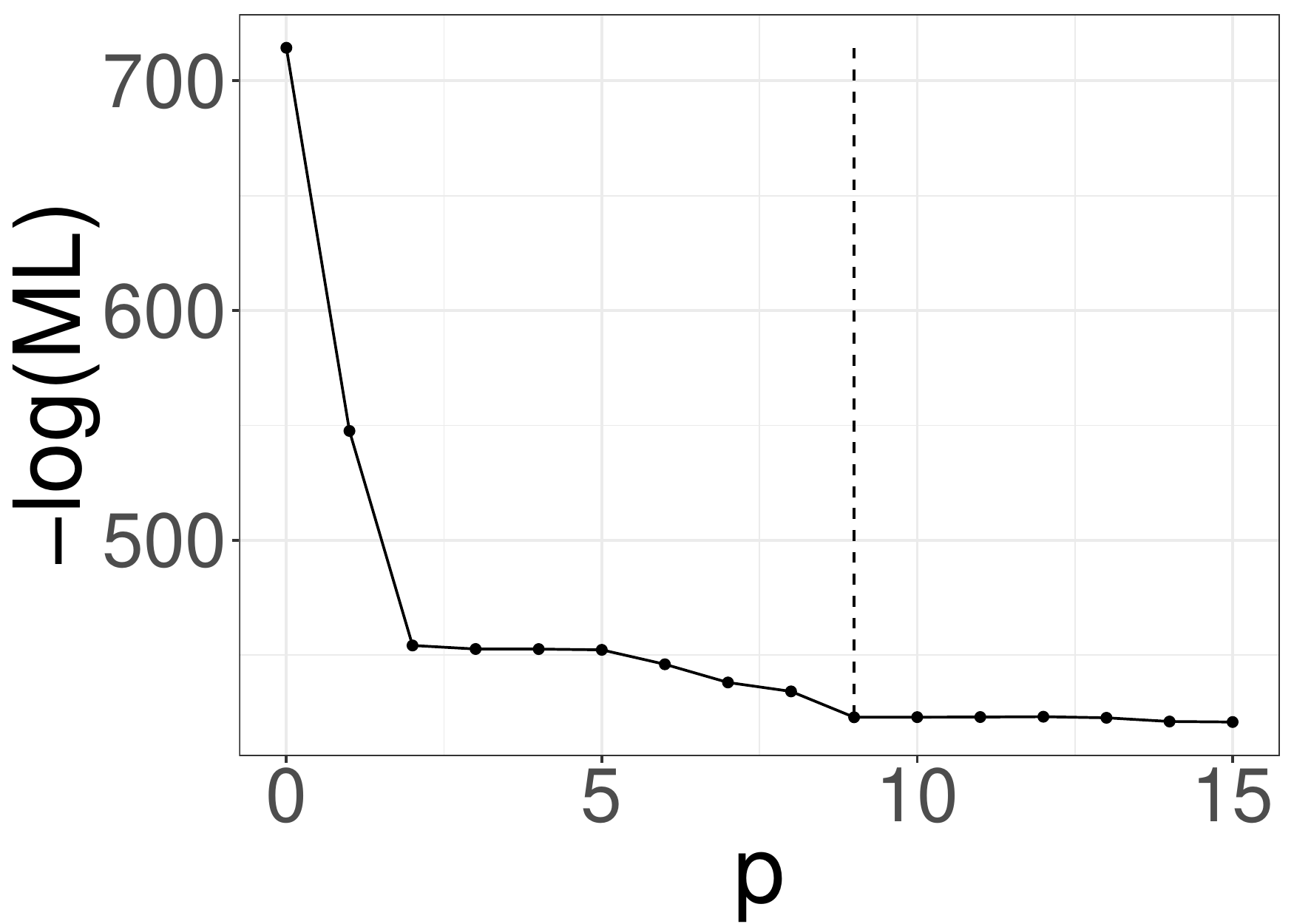}
  }
  \caption{Negative maximum log likelihood for different AR($p$)
    models applied to (a)-(c): three realizations of the ARMA(1,1)
    model with~$a=0.75, b=0.8$ of length~$n=128$ and (d) the sunspot
    data.  The respective DIC-minimizing order is visualized by a black
    dashed line.} \label{fig:visualInspectionARMA}
\end{figure}

\begin{remark}\label{rem_visualInspection}
  {\cb Under relatively weak conditions (see
    e.g.~\cite{kreiss2011range}) a linear process can be written as an
    AR($\infty$)-process with white noise errors (similarly to the
    famous Wold representation). Consequently, an AR($p$)-model with
    sufficiently large order captures the structure of a (Gaussian)
    linear process to any degree of accuracy. In this sense, the use
    of an AR-model with a sufficiently large order can be viewed as a
    nonparametric procedure, a fact, that has been exploited by
    AR-sieve-bootstrap methods for quite some time. For a recent
    mathematical analysis of the validity and limitations of this
    approach we refer to~\cite{kreiss2011range}.
	
    Consequently, an AR-model can still be used for spectral density
    estimation under misspecification as long as the order is
    sufficiently large. In this sense standard order selection
    techniques such as DIC-minimization tend to choose large orders in
    this situation. However, looking at scree-like plots of the
    negative maximum log likelihood for increasing orders one can
    often see a clear bend (elbow) in the curve (with a slow decay
    from that point on that is not slow enough to be captured by
    standard penalization techniques). Similar to the use of scree
    plots in the context of PCA, that point can be seen as a
    reasonable truncation point ('elbow criterion') where those
    features best explained by the parametric model have been
    captured. While this small model does not yet fully explain the
    data, adding more parameters is not helping the nonparametrically
    corrected procedure that we propose. In other words, we are not
    interested in an elaborate AR($\infty$) approximation but rather
    in a proxy model that captures the main parametric features of the
    data.

In the context of an autoregressive working model, we approximate the
negative maximum log-likelihood by the negative log-likelihood evaluated at the Yule-Walker
estimate. This is to ensure numerical stability and computational speed, especially for large orders.
The approximation is motivated by the asymptotic equivalence of both estimates,
see e.g. Chapter 8 in~\cite{brockwell2009time}. 
The estimate is referred to as \emph{negative maximum log-likelihood} in the text.

	Figure~\ref{fig:visualInspectionARMA} shows the scree-like
        plots for three exemplary ARMA(1,1) realizations from the
        above model as well as the sunspot data set. In all three
        realizations the elbow is clearly at $p=1$ (which is
        consistent with the AR-part of the model), while the
        DIC-criterion choses orders between 4 and 6.
	
	Table~2
        shows the simulation results for the ARMA(1,1) model and a
        fixed order of $p=1$.  
While a parametric AR($1$) model is clearly not able to explain the data (see e.g.\ the zero coverage of the uniform credibility intervals), this choice of the order significantly improves the results of the NPC procedure for the ARMA(1,1) data. In fact, the latter is now better than both the AR procedure as well as the NP procedure while at the same time the confidence in the model as indicated by $\hat\eta$ increases.

\begin{table}
\caption{Average Integrated Absolute Error (aIAE), Uniform 0.9-Credible Interval coverage (cUCI) and
average posterior model confidence~$\hat \eta$ for ARMA(1,1) data
and fixed order~$p=1$.}\label{tab:hi}
\begin{adjustbox}{max width=\linewidth}
\fbox{
\begin{tabular}{l|rrrlrrrlrrr}
& \multicolumn{11}{c}{ARMA($1,1$): $a=0.75$, $b=0.8$} \\
\hline
&&aIAE &&&& cUCI &&&& $\hat \eta$ &\\
\cline{2-4}\cline{6-8}\cline{10-12}
&    $n=64$  &  $n=128$ & $n=256$  && $n=64$  &  $n=128$ & $n=256$ && $n=64$  &  $n=128$ & $n=256$ \\
$\operatorname{AR_{(p=1)}}$&       1.179  &  1.243  & 1.289 && 0  &  0  & 0 && - & - & -  \\
$\operatorname{NPC_{(p=1)}}$&      0.942  &  0.741  & 0.586 && 0.986  &  0.969  & 0.954 && 0.601 & 0.635 & 0.670 \\
\end{tabular}}
\end{adjustbox}
\end{table}

        For the sunspot data that effect can also be seen clearly as
        the above procedure proposes to use $p=2$ in the nonparametric
        procedure while the DIC-criterion suggests $p=9$. For a
        detailed discussion of this data analysis we refer to
        Section~\ref{sec:sunspot}, similar effects for the LIGO data
        are discussed in Section~\ref{sec:ligo}.  }

\end{remark}

\subsection{Sunspot data}\label{sec:sunspot}
In this section, we analyse the yearly sunspot data from 1700 until
1987.  We take the mean-centered version of the square root of the 288
observations as input data.  We compare the AR and the NPC procedure
for fixed values~$p=1,2,3,9$. While~$p=9$ minimises the DIC, {\cb
  $p=2$ captures the main AR-features of the data as indicated by the
  elbow criterion (see Remark~\ref{rem_visualInspection} and
  Figure~\ref{fig:visualInspectionARMA}~(d)).}  The results are shown
in Figure~\ref{fig_sunspot}.
\begin{figure}[t]
 \centering
 	\includegraphics[width=\textwidth]{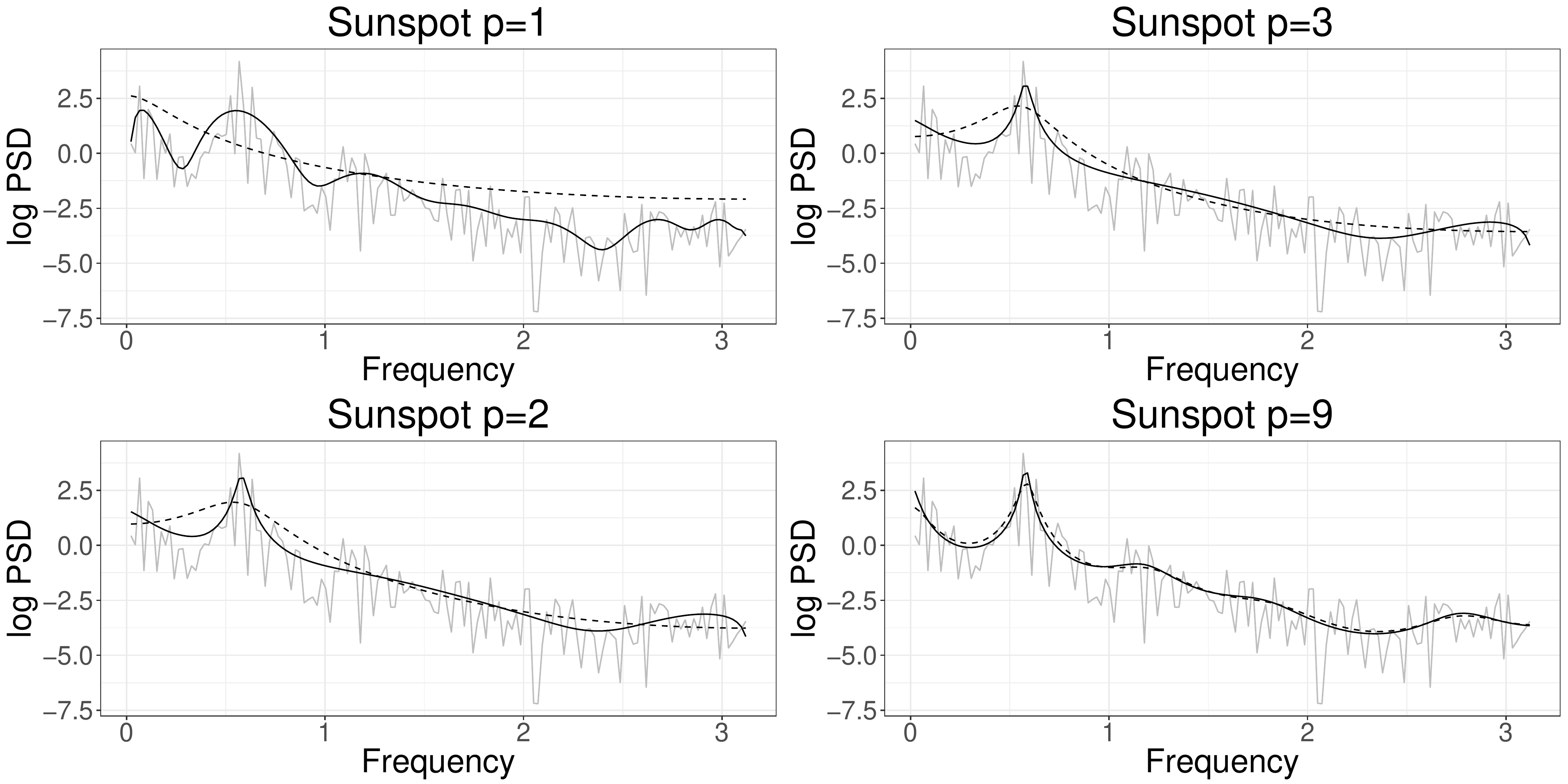}
 \caption{Posterior median spectral density estimates NPC (solid black)
   and AR (dashed black)
   for the transformed sunspot data on a logarithmic scale, for
   different autoregressive orders~$p$. The log-periodogram is
   visualised in grey.} \label{fig_sunspot}
\end{figure} {\cb While for $p=1$ the Bernstein polynomials of the
  nonparametric correction cannot yet capture the peaks sufficiently
  well, this is clearly the case for $p=2$ (the choice obtained from
  the elbow criterion): 
  While the parametric model itself can clearly not yet explain the
  data well, enough features are captured to improve the nonparametric
  correction.  For larger order choices, the estimate from the NPC
  method does not change much anymore, so that the correction does
  indeed not profit from additional parameters in the AR-model. In
  fact for $p=9$ (as indicated by DIC),} the Bayes estimator of AR and
NPC are very similar.


\pagebreak

\subsection{Gravitational wave data}\label{sec:ligo}

Gravitational waves, ripples in the fabric of spacetime caused by
accelerating massive objects, were predicted by Albert Einstein in
1916 as a consequence of his general theory of relativity, {\cb see}
\citet{einstein:1916}.  Gravitational waves originate from
non-spherical acceleration of mass-energy distributions, such as
binary inspiraling black holes, pulsars, and core collapse supernovae,
propagating outwards from the source at the speed of light.  However,
they are very small (a thousand times smaller than the diameter of a
proton) so that their measurement has provided decades of enormous
engineering challenges.

On Sept.\ 14, 2015, the Laser Interferometric Gravitational Wave
Observatory (LIGO), see \citet{ligo:2015}, made the first direct detection
of a gravitational wave signal, GW150914, originating from a binary
black hole merger~\citep{detection:2016}.  The two L-shaped LIGO
instruments (in Hanford, Washington and Livingston, Louisiana) each
consist of two perpendicular arms, each 4 kilometers long. A passing
gravitational wave will alternately stretch one arm and squeeze the
other, generating an interference pattern which is measured by
photo-detectors.  The detector output is a time series that consists
of the time-varying dimensionless strain $h(t)$, the relative change
in spacing between two test masses. The strain can be modelled as a
deterministic gravitational wave signal $s(t,\btheta)$ depending on a
vector $\btheta$ of unknown waveform parameters plus additive noise
$n(t)$, such that
\[
  h(t)= s(t,\btheta) + n(t), \quad t=1,\ldots,T.
\]

There are a variety of noise sources at the LIGO detectors.  This
includes {\em seismic} noise, due to the motion of the mirrors from
ground vibrations, earthquakes, wind, ocean waves, and vehicle
traffic, {\em thermal} noise, from the microscopic fluctuations of the
individual atoms in the mirrors and their suspensions, and {\em shot}
noise, due to the discrete nature of light and the statistical
uncertainty from the ``photon counting'' that is performed by the
photo-detectors.  In particular, LIGO noise includes high power,
narrow band, spectral lines, visible as sharp peaks in the
log-periodogram. As the LIGO spectrum is time-varying and subject to
short-duration large-amplitude transient noise events, so-called
``glitches'', a precise and realistic modelling and estimation of the
noise component jointly with the signal is important for an accurate
inference of the signal parameters $\btheta$. The current approach,
which was also used for estimating the parameters of GW150914
in \citet{estimation:2016}, is to first use the Welch method
\citep{welch:1967} to estimate the spectral density from a separate
stretch of data, close to but not including the signal and then to
assume stationary Gaussian noise with this known spectral density in
order to estimate the signal parameters.

Several approaches have been suggested in the recent gravitational
wave literature to simultaneously estimate the noise spectral density
and signal parameters. These include generalising the Whittle
likelihood to a Student-t likelihood as in \citet{roever:2011a}, similarly
modifying the likelihood to include additional scale parameters and
then marginalising over the uncertainty in the PSD
as in \citet{littenberg:2013}, using cubic splines for smoothly varying
broad-band noise and Lorentzians for narrow-band line features
as in \citet{littenberg:2015}, a Morlet-Gabor continuous wavelet basis for
both gravitational wave burst signals and glitches
as in \citet{cornish:2015}, the nonparametric approach of
\citet{Choudhuri04} using a Dirichlet-Bernstein prior
\citep{edwards:2015} and a generalisation of this using a B-spline
prior, see \citet{edwards:2017}.


We consider 1 s of real LIGO data collected during the sixth science
run (S6), recoloured to match the target noise sensitivity of Advanced
LIGO \citep{christensen:2010}. The data is differenced and then
multiplied by a Hann window to mitigate spectral leakage.  A low-pass
Butterworth filter (of order 20 and attenuation 0.25) is then applied
before downsampling from a LIGO sampling rate of 16384 Hz to 4096 Hz,
reducing the volume of data.

\begin{figure}[tb]
 \centering
 \includegraphics[width=\textwidth]{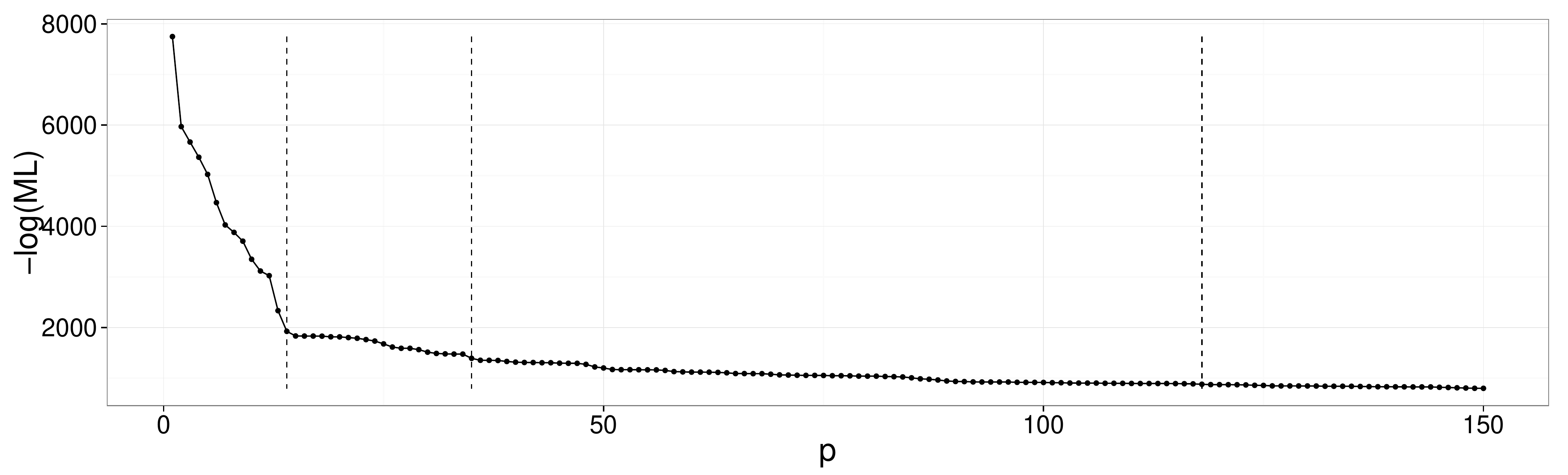}
 \caption{Negative maximum log likelihood for different AR($p$) models
   applied to Advanced LIGO S6 data.} \label{fig_screeLIGO}
\end{figure}

We first run a pure nonparametric model, corresponding to a
nonparametrically corrected likelihood with an AR(0) working model
(i.e.\ the Whittle likelihood) to estimate the spectral density.  We
then compare this to a nonparametrically corrected model with an order
of $p = 14$, where a clear elbow can be seen in the negative
log-likelihood plot (see Figure~\ref{fig_screeLIGO} and
Remark~\ref{rem_visualInspection}).  We run these simulations for
100,000 MCMC iterations, with a burn-in of 50,000, and thinning factor
of 5. Results are illustrated in Figure~\ref{fig_GW_NPC}~(a).

Even though $k$ converged to $k \approx 900$ mixture components, it is
clear that the Bernstein-Dirichlet prior together with the Whittle
likelihood is not flexible enough to estimate the sharp peaks of the
LIGO spectral density. The parametric AR(14) model (estimated using
the Bayesian autoregressive sampler described in Appendix~\ref{sec_appendixSampling}) captures the four main peaks but not their
sharpness.  Additionally, it does not capture the structure well in
the frequency bands~0 to~450~Hz as well as larger than~1100~Hz.  When
compared to the AR(0) model, the nonparametrically corrected model
based on $p=14$ estimates the sharp peaks much better.  Furthermore,
it sharpens all four peaks of the AR(14)-model (with a slight
exception around 400~Hz, where seemingly two very sharp peaks overlap,
a feature that is not captured by the AR(14) model at all).  In the
frequency bands 0 to 450~Hz as well as larger than 1100~Hz, where the
parametric model fails altogether, the correction yields similar
results to the nonparametric Whittle procedure.  Similarly to the
nonparametric Whittle procedure $k$ tends towards $k_{\max}=800$
indicating that the Bernstein-Dirichlet prior together with an
AR($14$)-model is not yet flexible enough for this data set.

\begin{figure}[tb]
 \subfigure[]{
   \centering
   \includegraphics[width=\textwidth]{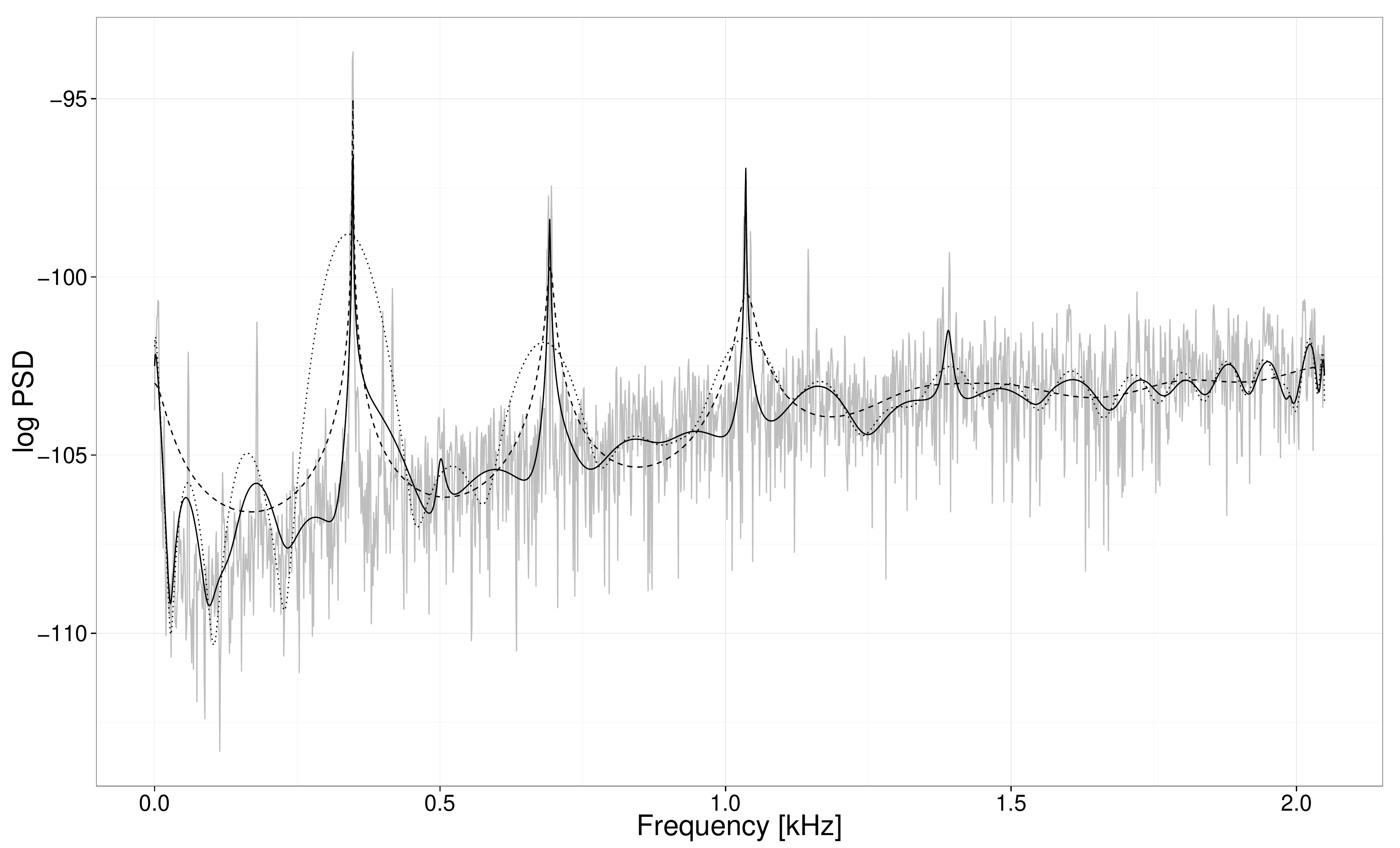}
 }
 \subfigure[]{
   \centering
   \includegraphics[width=\textwidth]{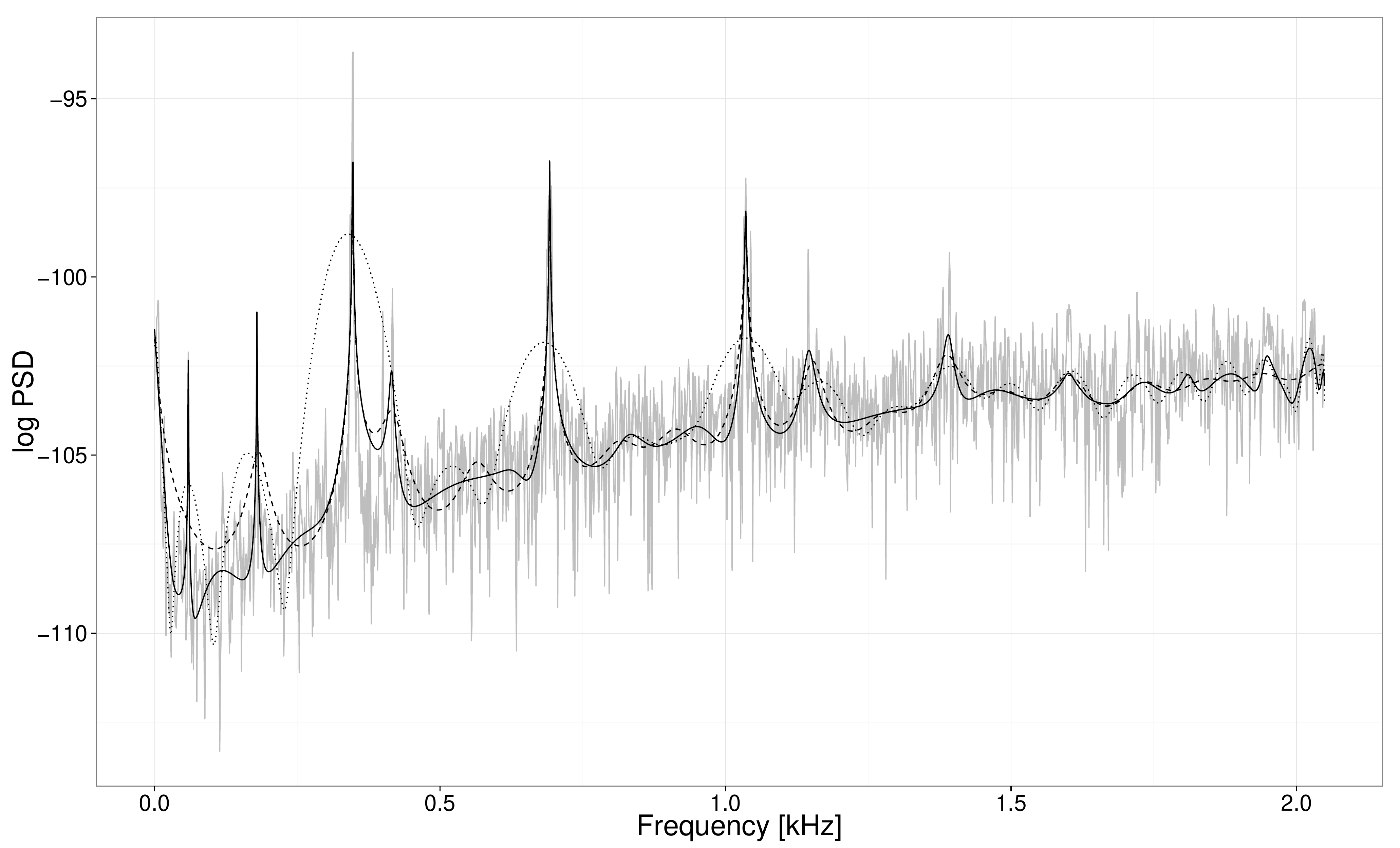}
 }

 \caption{Estimated log spectral density for a 1 s segment of LIGO S6
   data. The posterior median log spectral density estimate of NPC
   under an AR($p$) working model (solid black), AR($p$) (dashed
   black), and NPC under an AR(0) working model (dotted black) are
   overlaid with the log periodogram (grey), where (a) depicts
   $p=14$ and (b) depicts $p=35$.} \label{fig_GW_NPC}
\end{figure}

\afterpage{\clearpage}

Looking closer at Figure~\ref{fig_screeLIGO}, the negative
log-likelihood between the models AR(14) and AR(35) decreases by 533,
which is not as sharp as the elbow at $p=14$, but still significant --
keeping in mind that the LIGO data is a very complex data set -- much
more so than the ARMA(1,1) or the sunspot data.  There is a moderately
sized jump before $p=35$, while afterwards the descend slows down
significantly. In fact the BIC chooses an order of $p=118$, where the
log-likelihood reaches the level of 879, showing that the difference
between $p=14$ and $p=35$ (of 533) is comparable to the one between
$p=35$ and $p=128$ (of 512). This indicates that there is indeed
another change of gradient around $p=35$. When looking at penalized
likelihoods, this is also the point, where different penalizations
start to obviously diverge.  The results for $p=35$ can be found in
Figure~\ref{fig_GW_NPC}~(b).

The parametric AR($35$) model already provides a reasonable fit to the
periodogram, picking up the major peaks (with the exception around
100~Hz), but under- and overestimates some of the peaks. In particular
there are still major problems in the frequency bands 0 to 300 Hz and
400 to 700~Hz.  The NPC procedure with $p=35$ keeps the peaks that
have been capture well by the parametric model but corrects problems
most prominently in the above mentioned frequency bands. It is worth
mentioning that the correction works in several ways: Sharpening
existing peaks (e.g.\ at 0 Hz), adding new peaks (e.g.\ at 100 Hz) as
well as smoothing out some erroneous peaks (e.g.\ at 600 Hz).
Overall, the resulting estimate seems to capture the structure quite
well in all frequency bands.  This impression is complemented by the
results of the NPC method with AR($35$) working model together with
the pointwise and uniform credible bounds obtained from the procedure
in Figure~\ref{fig_GW_CI}.

\begin{figure}[tb]
 \centering
 \includegraphics[width=\textwidth]{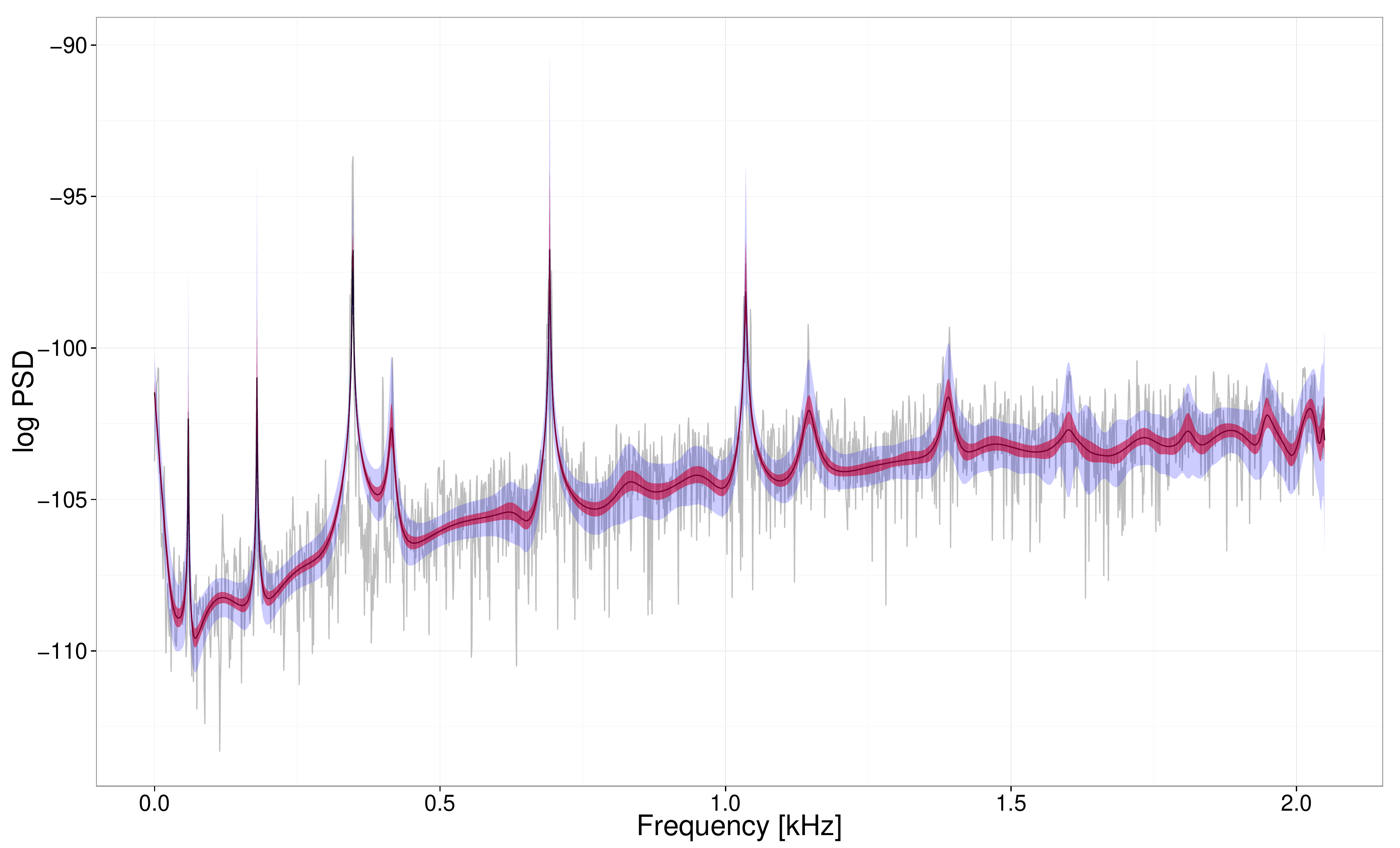}
 \caption{Estimated log spectral density for a 1 s segment of LIGO S6
   data. The posterior median log spectral density estimate of NPC
   under an AR(35) working model (solid black), pointwise 0.9-credible
   region (shaded red), and uniform 0.9-credible region (shaded violet)
   are overlaid with the log periodogram (grey).} \label{fig_GW_CI}
\end{figure}

\afterpage{\clearpage}

\section{Conclusion}\label{sec:summary}
In this work we propose a nonparametric correction of a parametric
likelihood to obtain an approximation of the true (inaccessible)
likelihood of a stationary time series. This approach extends the
famous approximation by Whittle.  For Gaussian data, the Whittle
likelihood, the nonparametric correction as well as the true
likelihood are asymptotically equivalent.  Secondly, we propose a
Bayesian procedure for spectral density estimation, where a parametric
likelihood is used with a Bayesian nonparametric spectral correction.
We show consistency of the resulting pseudo-posterior distribution for
a fixed parametric likelihood.  Furthermore, we present a Bayesian
semiparametric procedure that combines inference about the parametric
working model with the nonparametric correction.  The extent of the
contribution of the parametric spectral density to the spectral
density estimate is controlled by a shape confidence parameter.
Simulation results have shown that this procedure inherits the
benefits from the parametric working model if the latter is
well-specified {\cb or describe a part of the features of the data
  well}, while in the misspecified case the results are comparable to
the usage of the Whittle likelihood.

Regarding future work, it is interesting to investigate whether in the
non-Gaussian case the class of time series for which asymptotically
consistent inference holds can be enlarged by choosing an appropriate
model.  It is important to understand the distributional influence in
the non-Gaussian case both in finite samples and asymptotically.  As
an example, in a bootstrap context, it suffices to capture the fourth
order structure of a linear model to obtain asymptotically valid
second-order frequentist properties of the autocovariance structure
\citet{Dahlhaus96, Kreiss03}.  As suggested by
\citet{kleijn2012bernstein} in a parametric setting, this property
does not simply carry over to a Bayesian context.  Preliminary results
for non-Gaussian autoregressive time series however have shown
considerable benefits (with respect to first and second order
frequentist properties) when the innovation distribution is
well-specified in comparison to a Gaussian model.  Since any
parametric likelihood is susceptible to misspecification, the ultimate
goal is to consider a Bayesian nonparametric model for the innovation
distribution, such as Dirichlet mixtures of normals.
Further directions for future work are automation of the {\cb elbow criterion} 
discussed in Remark~\ref{rem_visualInspection}.
Instead of choosing a fixed order in advance, an automation might as well 
serve as a guideline for specifying a prior on the AR order parameter, which
can be included in the inference by means of RJMCMC (c.f. Remark~\ref{rem_arOrder}).

\section*{Acknowledgements}
This work was supported by DFG grant KI 1443/3-1. Furthermore, the
research was initiated during a visit of the {\cb fourth} author at
Karlsruhe Institute of Technology (KIT), which was financed by the
German academic exchange service (DAAD).  Some of the preliminary
research was conducted while the first author was at KIT, where her
position was financed by the Stifterverband f\"ur die deutsche
Wissenschaft by funds of the Claussen-Simon-trust.  We also thank the
New Zealand eScience Infrastructure (NeSI) and the
Universit\"atsrechenzentrum (URZ) Magdeburg for their high performance
computing facilities, and the Centre for eResearch at the University
of Auckland and J\"org Schulenburg for their technical support.

\bibliographystyle{plainnat}

\appendix
\section{Appendix: Proofs}\label{sec:proofs}

\begin{proof}[of \eqref{eq_whittle}]
Since $F_n$ is orthonormal it holds $|\det F_n|=1$ and hence by an application of the transformation theorem
\begin{eqnarray*}
	p^W_{\mathbf{Z}_n=F_n^T\mathbf{\tilde{Z}}_n}(\mathbf{z}_n|f)  &=& |\det F_n|\; p_{\mathbf{\tilde{Z}}_n}^W(F_n\mathbf{z}_n|f)  \propto \det(D_n)^{-1/2}\exp\left(-\frac 1 2 \boldsymbol{\tilde z}_n^TD_n^{-1}\boldsymbol{\tilde z}_n\right)\\
&\propto& \exp\left(-\frac 1 2 \sum_{j=0}^{n-1}\log f(\lambda_j) -\frac 1 2  \boldsymbol{\tilde z}_n^TD_n^{-1}\boldsymbol{\tilde z}_n\right),
\end{eqnarray*}
where $\boldsymbol{\tilde z}_n=F_n \boldsymbol z_n$.
Finally, by \eqref{eq_perio_four}, $I_n(\lambda_k)=I_n(\lambda_{n-k})$ as well as $f(\lambda_k)=f(\lambda_{n-k})$ it holds
\begin{align*}
	& \boldsymbol{\tilde{z}}_n^TD_n^{-1}\boldsymbol{\tilde{z}}_n=\frac{\tilde{z}_n^2(0)}{2\pi f(0)}+\frac{\tilde{z}_n^2(n/2)}{2\pi f(\pi)}1_{\{n\text{ even}\}}+\sum_{j=1}^N\frac{\tilde{z}^2_n(2j-1)+\tilde{z}^2_n(2j)}{2\pi f(\lambda_j)}\\ 
	&=\frac{I_{n,\lambda_0}(\boldsymbol z)}{f(0)}+\frac{I_{n,\lambda_{n/2}}(\boldsymbol z)}{f(\pi)}1_{\{n\text{ even}\}}+\sum_{j=1}^N\frac{2 I_{n,\lambda_j}(\boldsymbol z)}{f(\lambda_j)}=\sum_{j=0}^{n-1}\frac{I_{n,\lambda_j}(\boldsymbol z)}{f(\lambda_j)},
\end{align*}
yielding the assertion.
\end{proof}
\par
\begin{proof}[of Remark~\ref{rem_identifiability}]
  The spectral density corresponding to a Gaussian white noise with
  variance $\sigma^2$ is given by $\sigma^2/2\pi$, hence
  $\sigma^2\,C_n= D_n$ and
	\begin{align*}
		p_{\text{i.i.d. }N(0,\sigma^2)}^C(\bZ_n|f)&\propto \frac{\det C_n^{-1/2}}{\sigma^n}\exp\left(-\frac{1}{ 2\sigma^2} \mathbf{Z}_n^TF_n^TC_n^{-1/2}F_nF_n^TC_n^{-1/2}F_n\mathbf{Z}_n\right)\\
		&\propto \det D_n^{-1/2}\exp\left(-\frac 1 2 (F_n\mathbf{Z}_n)^TD_n^{-1}(F_n\mathbf{Z}_n)\right),
	\end{align*}
which can be shown to be  the Whittle likelihood  analogously to the proof of equation~\ref{eq_whittle} since $F_n \mathbf{Z}_n=\boldsymbol{\tilde{Z}}_n$.
\end{proof}

\begin{proof}[of Proposition~\ref{cor_31}]
  If $f=f_{\text{param}}$ then $C_n=\text{Id}$ and
  $F_n^TC_n^{-1/2}F_n=\text{Id}$, hence a) follows.  For b), consider
  a time series distributed according to the corrected likelihood,
  i.e.\ $\mathbf{X}_n=(X_1,\ldots,X_n)\sim p_{\text{param}}^C$. An
  application of the transformation theorem shows that $\mathbf
  Y_n=F_n^TC_n^{-1/2}F_n\mathbf X_n\sim p_{\text{param}}$ on noting
  that $(F_n^TC_n^{-1/2}F_{n})^{-1}=F_n^TC_n^{1/2}F_n$ and
  $\operatorname{det}((F_n^TC_n^{-1/2}F_{n})^{-1})=\sqrt{\mbox{det}(C_n)}$.
  By \eqref{eq_perio_four} it holds
	\begin{align*}
&	I_{n,\lambda_j}(\mathbf X_n)=I_{n,\lambda_j}(F_n^TC_n^{1/2}F_n \mathbf Y_n)
=\frac{f(\lambda_j)}{f_{\text{param}}(\lambda_j)}\, I_{n,\lambda_j}(\mathbf Y_n)
\end{align*}
as $F_n (F_n^TC_n^{1/2}F_n \mathbf Y_n)=C_n^{1/2} (F_n\mathbf
Y_n)$. By~\citet{brockwell2009time}, Proposition 10.3.1., it holds $\E
I_{n,\lambda_j}(\mathbf Y_n)=f_{\text{param}}(\lambda_j)+o(1)$ where
convergence is uniform in $j$ (recall that the time series is mean
zero). From this the assertion follows because $f_{\text{param}}$ is
bounded from below by assumption.
\end{proof}

\begin{proof}[of Theorem~\ref{theorem_contiguity}]\citet{Ghosal04} proved  mutual contiguity of the true Gaussian and the Whittle likelihood
  in the frequency domain which carries over to the time domain by an
  application of the transformation theorem because $F_n$ is
  bijective.  Hence, it is sufficient to show mutual contiguity of the
  corrected parametric likelihood and the Whittle likelihood.
  Following the proof of Theorem 1 in \citet{Ghosal04} it is enough to
  show that their log-likelihood ratio is a tight sequence under both
  $p^C$ as well as $p^W$. To this end, note that it holds
\begin{align*}
	&p^W\propto \det D_n^{-1/2}\exp\left( -\frac 12 \mathbf{Z}_n^TF_n^TD_n^{-1}F_n\mathbf{Z}_n \right),\\
	&p^C\propto\det C_n^{-1/2} \;\det \Sigma_{n,\text{param}}^{-1/2} \exp\left( -\frac 12 \mathbf{Z}_n^TF_n^TC_n^{-1/2}F_n\Sigma_{n,\text{param}}^{-1}F_n^TC_n^{-1/2}F_n\mathbf{Z}_n  \right),
\end{align*}
where $\Sigma_{n,\text{param}}$ is the covariance matrix of the
corresponding parametric time series, e.g. for the AR($p$)-case the
covariance matrix associated with likelihood~\eqref{eq_param_dens}.
Hence, the log-likelihood ratio is given by
\begin{align*}
	&\frac 1 2 \left( \log \det D_n -\log \det C_n -\log \det \Sigma_{n,\text{param}} \right)\\
	&+\frac 1 2 \left( \mathbf{Z}_n^TF_n^TD_n^{-1}F_n\mathbf{Z}_n-\mathbf{Z}_n^TF_n^TC_n^{-1/2}F_n\Sigma_{n,\text{param}}^{-1}F_n^TC_n^{-1/2}F_n\mathbf{Z}_n \right)=:\frac 1 2 A_n+\frac 1 2 B_n(\mathbf{Z}_n).
\end{align*}
Defining $D_{n,\text{param}}$ analogously to $D_n$ as in
\eqref{eq_def_Dn} with the nonparametric spectral density $f$ replaced
by the parametric version $f_{\text{param}}$ as e.g. for AR($p$) given
in \eqref{eq_def_sd_param}, we get
\begin{align*}
	&A_n=\log\det D_{n,\text{param}} -\log \det \Sigma_{n,\text{param}}=O(1),
\end{align*}
where the boundedness follows from Lemma A.1 in \citet{Ghosal04}.  To
obtain stochastic boundedness of $B_n(\mathbf{Z}_n)$ under $p^W$ as
well as $p^C$ we will show boundedness of the expectation and
variance. Following \citet{Ghosal04} we get under $p^W$ (i.e.\ $F_n
\mathbf Z_n\sim N(0,D_n)$)
\begin{align*}
	&\E B_n(\mathbf{Z}_n)=\mbox{tr}\left( I_n-C_n^{-1/2}F_n\Sigma^{-1}_{n,\text{param}}F_n^TC_n^{-1/2}D_n\right).	
\end{align*}
Because $\mbox{tr}(AB)=\mbox{tr}(BA)$, it holds by $C_n^{-1/2}D_nC_n^{-1/2}=D_{n,\text{param}}$
\begin{align}
	&\mbox{tr}(C_n^{-1/2}F_n\Sigma^{-1}_{n,\text{param}}F_n^TC_n^{-1/2}D_n )
	=\mbox{tr}(F_n\Sigma^{-1}_{n,\text{param}}F_n^TD_{n,\text{param}}),\label{eq_proof_th31_1}
\end{align}
the assertion follows by Lemma A.2 in \citet{Ghosal04} by the linearity
of the trace.  Similar arguments yield the assertion under $p^C$
(i.e.\ $\mathbf Z_n\sim
N(0,F_n^TC_n^{1/2}F_n\Sigma_{n,\text{param}}F_n^TC_n^{1/2}F_n))$
noting that
\begin{align*}
	&\E B_n(\mathbf{Z}_n)=\mbox{tr}\left( F_n^TD_n^{-1}C_n^{1/2}F_n\Sigma_{n,\text{param}}F_n^TC_n^{1/2}F_n -I_n\right)
	=\mbox{tr}\left( D_{n,\text{param}}^{-1}F_n\Sigma_{n,\text{param}}F_n^T-I_n \right).	
\end{align*}
Concerning the variance we get under $p^W$
\begin{align*}
	&\var B_n(\mathbf{Z}_n)=2\,\mbox{tr}\left(\left( I_n-C_n^{-1/2}F_n\Sigma^{-1}_{n,\text{param}}F_n^TC_n^{-1/2}D_n \right)^2\right).	
\end{align*}
Similar arguments as above yield
\begin{align*}
	&\mbox{tr}\left( \left(C_n^{-1/2}F_n\Sigma^{-1}_{n,\text{param}}F_n^TC_n^{-1/2}D_n\right)^2  \right)	=\mbox{tr}\left( \left(F_n\Sigma_{n,\text{param}}^{-1}F_n^TD_{n,\text{param}}\right)^2 \right).
\end{align*}
Together with \eqref{eq_proof_th31_1} this yields
\begin{align*}
	&\var B_n(\mathbf{Z}_n)=2\,\mbox{tr}\left( \left( I_n- F_n\Sigma_{n,\text{param}}^{-1}F_n^TD_{n,\text{param}} \right)^2 \right)
\end{align*}
by $\mbox{tr}( (I_n-A )^2
)=\mbox{tr}(I)-2\mbox{tr}(A)+\mbox{tr}(A^2)$. Hence, the assertion
follows by Lemma A.2 in \citet{Ghosal04}. Analogous assertions yield
the result under $p^C$.
\end{proof}

\citet{Ghosal99} give sufficient conditions for the consistency of the
posterior distribution when using Bernstein polynomial priors in terms
of the existence of exponentially powerful tests for testing
$H_0:\theta=\theta_0$ and prior positivity of a Kullback-Leibler
neighbourhood. But these require i.i.d. observations. To prove
posterior consistency under the Whittle likelihood, \citet{Choudhuri04}
extend this result to independent but not identically distributed
observations and apply this to the periodogram ordinates which are
independent exponential random variables under the Whittle
likelihood. However, periodogram ordinates under the corrected
likelihood are no longer independent, therefore this theorem is not
applicable. We give an extension to non-independent random variables
in the following theorem, which is needed to prove
Theorem~\ref{theorem:consistency-corrected}:
\begin{theorem}\label{theorem:consistency-general}
  Let $\Z_n=(Z_{1,n},\ldots,Z_{n,n})$ be random vectors with
  probability distribution $P^n_{\theta}$ and corresponding pdf
  $p_n(\cdot|\theta)$. Let $\theta_0 \in \Theta$, $U_n\in {\cal T}$,
  where ${\cal T}$ denotes the Borel $\sigma$-algebra on $\Theta$, and
  $\Pi$ a probability distribution on $(\Theta,{\cal T})$. Define
\begin{eqnarray*}
K_n(\theta_0,\theta) &=& \E_{\theta_0}\left[ \log \frac{p_n(\Z_n|\theta_0)}{p_n(\Z_n|\theta)} \right] \mbox{ and }\\
V_n(\theta_0,\theta) &=& \var_{\theta_0}\left[ \log \frac{p_n(\Z_n|\theta_0)}{p_n(\Z_n|\theta)} \right] .
\end{eqnarray*}
Under the following assumptions of prior positivity of neighbourhoods
and existence of tests:
\begin{itemize}
\item[(C1)] There exists a set $B\in {\cal T}$ with $\Pi(B)>0$ such that
\begin{itemize}
\item[(a)] $\frac{1}{n^2} V_n(\theta_0,\theta) \rightarrow 0 \qquad
  \mbox{for all } \theta \in B$, and
\item[(b)] $\lim \inf_{n\rightarrow \infty} \Pi(\{\theta \in B:
  \frac{1}{n}K_n(\theta_0,\theta) < \epsilon\}) >0 \qquad \forall
  \epsilon>0$.
\end{itemize}
\item[(C2)] There exists test functions $\{\phi_n\}$, subsets
  $\Theta_n \subset \Theta$, and constants $K_1,K_2,k_1,k_2>0$ such
  that
\begin{itemize}
	\item[(a)] $\E_{\theta_0} \phi_n \rightarrow 0$.
\item[(b)]$\sup_{\theta \in U_n^c\cap \Theta_n} \E_\theta(1-\phi_n)\ls K_1 \e^{-k_1 n}$, and 
\item[(c)]$\Pi(\Theta_n^c) \ls K_2 \e^{-k_2 n}.$
\end{itemize}
\end{itemize}
Then 
\[\Pi(\theta\in U_n^c|\Z_n)\rightarrow 0 \qquad \mbox{in } P_{\theta_0}^n \mbox{-probability}.\]
\end{theorem}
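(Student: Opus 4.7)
The plan is to carry out a Schwartz-type posterior consistency argument, essentially following \citet{Ghosal99} and \citet{Choudhuri04}, but replacing the law-of-large-numbers step these authors use in the i.i.d.\ and independent-not-identically-distributed settings by a Chebyshev bound that only requires the second-moment control (C1a). Writing the posterior odds as
\begin{align*}
\Pi(\theta\in U_n^c\mid\Z_n)=\frac{N_n}{D_n},\quad R_n(\theta)=\frac{p_n(\Z_n\mid\theta)}{p_n(\Z_n\mid\theta_0)},\quad N_n=\int_{U_n^c}\!R_n\,d\Pi,\quad D_n=\int_\Theta R_n\,d\Pi,
\end{align*}
the task is to show that for every $\epsilon>0$ one has $D_n\geq \Pi(B_\epsilon)e^{-2n\epsilon}$ and $N_n\leq e^{-cn}$ for some $c>0$, both on events of $P_{\theta_0}^n$-probability tending to one; taking $\epsilon<c/2$ then yields the desired conclusion.

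For the denominator I invoke (C1). Set $B_\epsilon=\{\theta\in B\colon K_n(\theta_0,\theta)<n\epsilon\}$, which has $\Pi(B_\epsilon)$ bounded below by a positive constant for all large $n$ by (C1b), and let $\tilde\Pi(\cdot)=\Pi(\cdot\cap B_\epsilon)/\Pi(B_\epsilon)$. Jensen's inequality applied to the exponential yields
\begin{align*}
\log D_n\;\geq\;\log\Pi(B_\epsilon)\;-\;\int_{B_\epsilon}\log\frac{p_n(\Z_n\mid\theta_0)}{p_n(\Z_n\mid\theta)}\,d\tilde\Pi(\theta).
\end{align*}
The random variable on the right has $P_{\theta_0}^n$-mean at most $n\epsilon$ by the defining property of $B_\epsilon$ and Fubini, and $P_{\theta_0}^n$-variance bounded by $\int V_n(\theta_0,\theta)\,d\tilde\Pi(\theta)=o(n^2)$ by (C1a) together with dominated convergence (shrinking $B_\epsilon$ further, if necessary, so that $V_n/n^2$ is uniformly integrable with respect to $\tilde\Pi$). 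Chebyshev's inequality then gives $D_n\geq\Pi(B_\epsilon)\,e^{-2n\epsilon}$ with probability tending to one.

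For the numerator I split
\begin{align*}
N_n\;=\;\phi_n\,N_n\;+\;(1-\phi_n)\!\!\int_{U_n^c\cap\Theta_n}\!\!R_n\,d\Pi\;+\;(1-\phi_n)\!\!\int_{\Theta_n^c}\!R_n\,d\Pi.
\end{align*}
The first summand divided by $D_n$ is at most $\phi_n$, which tends to zero in $P_{\theta_0}^n$-probability by (C2a). For the remaining two pieces, Fubini together with (C2b) and (C2c) yields
\begin{align*}
\E_{\theta_0}\!\!\int_{U_n^c\cap\Theta_n}\!(1-\phi_n)R_n\,d\Pi\;&=\;\int_{U_n^c\cap\Theta_n}\!\E_\theta(1-\phi_n)\,d\Pi\;\leq\;K_1 e^{-k_1 n},\\
\E_{\theta_0}\!\!\int_{\Theta_n^c}\!R_n\,d\Pi\;&=\;\Pi(\Theta_n^c)\;\leq\;K_2 e^{-k_2 n},
\end{align*}
and Markov's inequality converts these expectation bounds into $P_{\theta_0}^n$-probability bounds of the form $e^{-k_i n/2}$. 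Combined with the denominator estimate and the choice $\epsilon<\min(k_1,k_2)/4$, this forces $N_n/D_n\to 0$ in $P_{\theta_0}^n$-probability.

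The main obstacle, and the only point at which the dependence structure of $\Z_n$ really enters, is the denominator bound: the usual i.i.d.\ strong-law argument is no longer available, so (C1) must be exploited through the Jensen--Chebyshev combination above, which is precisely why the variance hypothesis $V_n=o(n^2)$ on the log-likelihood ratio is built into the assumptions rather than an almost-sure convergence of $n^{-1}\log R_n$.
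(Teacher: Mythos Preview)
Your argument is the standard Schwartz-type decomposition and is exactly what the paper has in mind: the paper's own proof consists of the single sentence ``The proof is completely analogous to the proof of Theorem~A.1 in \citet{Choudhuri04}'', and your write-up is precisely a detailed version of that argument, with the Chebyshev step replacing the law of large numbers used in the independent case.

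One small point deserves tightening. In the denominator bound you write that $\int_{B_\epsilon} V_n(\theta_0,\theta)\,d\tilde\Pi(\theta)=o(n^2)$ ``by (C1a) together with dominated convergence (shrinking $B_\epsilon$ further, if necessary, so that $V_n/n^2$ is uniformly integrable)''. As stated, (C1a) gives only pointwise convergence $V_n/n^2\to 0$ on $B$, and $\tilde\Pi$ itself changes with $n$, so neither dominated convergence nor uniform integrability is directly available. The clean fix is to note that pointwise convergence on the set $B$ of finite $\Pi$-mass implies convergence in $\Pi$-measure, so one can choose $\delta_n\downarrow 0$ with $\Pi(\{\theta\in B: V_n(\theta_0,\theta)\ge n^2\delta_n\})\to 0$; intersecting $B_\epsilon$ with $\{V_n<n^2\delta_n\}$ then still has $\Pi$-mass bounded away from zero by (C1b), and on this set $\int V_n/n^2\,d\tilde\Pi\le\delta_n\to 0$, so the Chebyshev bound goes through. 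In the paper's application (proof of Theorem~\ref{theorem:consistency-corrected}) this issue does not arise anyway, since there $V_n/n^2=O(1/n)\|c-c_{0,\eta}\|_\infty^2$ is uniformly bounded on $B$.
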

\begin{proof}
  The proof is completely analogous to the proof of Theorem A.1 in
  \citet{Choudhuri04}.
\end{proof}

The following lemma replaces Lemma B.3 in  \citet{Choudhuri04}.
\begin{lemma}\label{lemma_b3}
Let $\Z_n \sim N(0, F_n^TD_n(f)^{1/2} S_n D_n(f)^{1/2}F_n)$ where
$S_n$ is a symmetric positive definite matrix and $D_n(f)=\mbox{diag}(f(\lambda_1),\ldots,f(\lambda_n))$.
With $c(\lambda_i)=f(\lambda_i)/f_{\text{param}}(\lambda_i)^\eta$ consider testing 
\begin{align*}
&H_0: c(\lambda_i)=c_{0}(\lambda_i), \quad  \mbox{ where } c_{0}(\lambda_i)\ls a \quad \mbox{ for } i=1,\ldots,n,\quad \mbox{ against }\\
&H_1: c(\lambda_i)=c_{1}(\lambda_i), \quad \mbox{ where } c_{1}(\lambda_i)<c_{0}(\lambda_i) -\epsilon \quad \mbox{ for } i=1,\ldots,n, \end{align*}
where $a>\epsilon>0$ do not depend on $n$.
Then there exists a test $\phi_n$ and constants $\beta_1,\beta_2>0$ depending only on $a$ and $\epsilon$ such that
\begin{align*}
&\E_{H_0}(\phi_n) \ls  \e^{-\beta_1 n}\qquad \mbox{ and}\qquad
\E_{H_1} (1-\phi_n) \ls \e^{-\beta_2 n}.
\end{align*}
\end{lemma}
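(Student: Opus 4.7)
The plan is to adapt the original Lemma B.3 of \citet{Choudhuri04} — which uses Chernoff-type bounds for sums of independent exponentials (periodograms under the Whittle likelihood, i.e.\ $S_n=I$) — to the present generalized Gaussian setting, where the dependence between the Fourier coefficients is encoded by the matrix $S_n$. The natural test statistic is a rescaled average of squared Fourier coefficients,
\[
 T_n := \frac{1}{n} \sum_{j=1}^n \frac{(F_n\bZ_n)_j^2}{f_{\text{param}}(\lambda_j)^\eta}.
\]
Writing $\mathbf{W}_n := S_n^{-1/2} D_n(f)^{-1/2} F_n \bZ_n \sim N(0,I_n)$ under the active hypothesis and using $f = c_\ell \cdot f_{\text{param}}^\eta$ under $H_\ell$, one obtains the quadratic-form representation
\[
 T_n = \frac{1}{n}\, \mathbf{W}_n^T M_\ell \mathbf{W}_n, \qquad M_\ell := S_n^{1/2} D_n(c_\ell) S_n^{1/2}, \quad \ell=0,1.
\]
The means then satisfy $\E_{H_\ell}T_n = \operatorname{tr}(M_\ell)/n = n^{-1}\sum_j (S_n)_{jj}\,c_\ell(\lambda_j)$, so that by the hypotheses of the lemma
\[
 \E_{H_0}T_n - \E_{H_1}T_n > \epsilon \,\overline{(S_n)}, \qquad \overline{(S_n)} := n^{-1}\operatorname{tr}(S_n).
\]
I would take $\phi_n = \mathbf{1}\{T_n < t_n^*\}$ with threshold $t_n^*$ midway between $\E_{H_0}T_n$ and $\E_{H_1}T_n$.

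The second step is an exponential tail bound for the Gaussian quadratic form $\mathbf{W}_n^T M_\ell \mathbf{W}_n$. Using the explicit moment generating function $\log \E \exp(s\,\mathbf{W}_n^T M_\ell \mathbf{W}_n) = -\tfrac{1}{2}\sum_i \log(1 - 2 s\mu_{\ell,i})$, where $\mu_{\ell,i}$ are the eigenvalues of $M_\ell$ (equivalently, a Hanson--Wright-type concentration inequality), one obtains
\[
 P\!\left( |T_n - \E_{H_\ell}T_n| > \tfrac{\epsilon}{2}\overline{(S_n)} \right) \leq 2 \exp\!\left( - c\, \min\!\left( \frac{n^2\epsilon^2\overline{(S_n)}^2/4}{\|M_\ell\|_F^2},\ \frac{n\epsilon \overline{(S_n)}/2}{\|M_\ell\|_{\text{op}}}\right)\right).
\]
Since $c_\ell(\lambda_j)\leq a$ on both hypotheses, one has $\|M_\ell\|_{\text{op}} \leq a\,\|S_n\|_{\text{op}}$ and $\|M_\ell\|_F^2 \leq a^2\operatorname{tr}(S_n^2) \leq a^2 n\,\|S_n\|_{\text{op}}^2$, so both tails decay like $\exp(-c'n)$, which gives the claimed bounds with constants $\beta_1,\beta_2 > 0$.

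The main obstacle is that for these constants to depend only on $a$ and $\epsilon$, one needs $\overline{(S_n)}$ and $\|S_n\|_{\text{op}}$ to be uniformly bounded away from $0$ and $\infty$, respectively. This uniformity is not part of the assumptions stated in the lemma itself but must be inherited from the application inside the proof of Theorem~\ref{theorem:consistency-corrected}: there, $S_n$ arises essentially as $D_n(f_{\text{param}})^{-\eta/2} F_n \Sigma_{n,\text{param}} F_n^T D_n(f_{\text{param}})^{-\eta/2}$, and Assumption~$\mathcal{A}$.\ref{ass_1} together with Lemma A.1 of \citet{Ghosal04} controls both its operator norm and its trace uniformly in $n$. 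Thus the above quadratic-form concentration inequality yields exponential bounds with constants $\beta_1,\beta_2$ depending only on $a$, $\epsilon$ (and on the model-level constants $\alpha,\beta$ that are fixed throughout).
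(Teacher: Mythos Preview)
Your approach does not prove the lemma as stated. The lemma asserts that $\beta_1,\beta_2$ depend \emph{only} on $a$ and $\epsilon$, with no assumption on $S_n$ beyond symmetric positive definiteness. Your test statistic $T_n=\tfrac{1}{n}\sum_j (F_n\bZ_n)_j^2/f_{\text{param}}(\lambda_j)^\eta$ leads, via Hanson--Wright, to exponents proportional to $\overline{(S_n)}/\|S_n\|_{\text{op}}$ (and its square). This ratio is at most $1$ and can be made arbitrarily small for admissible $S_n$, so your bounds are not uniform in $S_n$. You acknowledge this and appeal to the specific $S_n$ arising in Theorem~\ref{theorem:consistency-corrected}; even granting that argument, you would only obtain a weaker lemma with additional hypotheses on $S_n$, not the statement you are asked to prove. (Incidentally, in that application $S_n=D_n(f_{\text{param}})^{-1/2}F_n\Sigma_{n,\text{param}}F_n^T D_n(f_{\text{param}})^{-1/2}$, without the $\eta$ you wrote.)

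The paper's proof avoids this difficulty entirely by building $S_n^{-1}$ into the test statistic: it rejects when
\[
T_n=\bZ_n^TF_n^TD_n(f_0)^{-1/2}S_n^{-1}D_n(f_0)^{-1/2}F_n\bZ_n<nx.
\]
Under $H_0$ this is \emph{exactly} $\chi^2_n$, independently of $S_n$, so the Chernoff bound for $\chi^2_n$ gives $\beta_1$ depending only on the threshold $x$. Under $H_1$, $T_n\deq \bY_n^TBB^T\bY_n$ with $B=S_n^{1/2}D_n(f_1/f_0)^{1/2}S_n^{-1/2}$; since $B$ is similar to $D_n(f_1/f_0)^{1/2}$, the eigenvalues of $BB^T$ are $c_1(\lambda_i)/c_0(\lambda_i)<1-\epsilon/a$, so $T_n\leq(1-\epsilon/a)\chi^2_n$ and another Chernoff bound gives $\beta_2$ depending only on $x,a,\epsilon$. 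The key idea you are missing is this ``whitening'' by $S_n^{-1}$, which removes all dependence on $S_n$ from the error probabilities.
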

\begin{proof}
Consider a test  $\phi_n$ that rejects $H_0$ if
\begin{align*}
	T_n=\bZ_n^TF_n^TD_n(f_0)^{-1/2}S_n^{-1}D_n(f_0)^{-1/2} F_n\bZ_n < nx
\end{align*}
with a critical value $x>0$ with $x\neq 1$ and $x\neq 1-\eps/a$. 
\par
Denote $\bY_n\sim N(0,I_n)$, then 
\begin{align*}
	T_n\ovunset{=}{\mathcal{D}}{H_0} \bY_n^T\bY_n\ovunset{=}{\mathcal{D}}{H_0} \chi^2_n.
\end{align*}
Consequently, the moment generating function of $T_n$ under $H_0$ is
given by $E\left[\e^{tT_n}\right]=(1-2t)^{-n/2}$ and exists for
$t<1/2$. By an application of the Markov inequality, we get for all
$z>0$
\begin{align*}
	&P_{c_0}\left(T_n< nx\right) = P_{c_0}\left(e^{-zT_n}>e^{ -nz x}\right)
	\ls \e^{nzx} \E_{c_0}\left[\e^{-zT_n}\right]=e^{-n [-zx+\frac 1 2 \log(1+2z)]}.
\end{align*}
The function $g_1(z)=-zx+\frac{1}{2}\log(1+2z)$ attains its maximum at $z_1=\frac{1-x}{2x}>0$ and
\begin{align*}
	g_1(z_{1})=-\frac{1-x}{2} +\frac 1 2\log\left( 1+\frac{1-x}{x}\right) > -\frac{1-x}{2} + \frac{1-x}{2} =0\end{align*}
as $\log(1+y)>\frac{y}{1+y}$ for $y\neq 0$.
Thus, setting $\beta_1=g_1(z_1)>0$, we obtain
\begin{align*}
	\E_{H_0}(\phi_n)=P_{c_0}\left(T_n< nx\right)\ls e^{-n\beta_1}. 
\end{align*}
Similarly, under $H_1$, we get
\begin{align*}
	&T_n\ovunset{=}{\mathcal{D}}{H_1} \bY_n^TB B^T\bY_n,\qquad \text{with } B=S_n^{1/2}D_n(f_1/f_0)^{1/2}S_n^{-1/2}.
\end{align*}
Since 
\begin{align*}
	\mbox{det}(B-\lambda I_n)=\mbox{det}(S_n^{1/2} [D_n^{1/2}(f_1/f_0)-\lambda I_n]S_n^{-1/2})=\mbox{det}(D_n^{1/2}(f_1/f_0)-\lambda I_n)
\end{align*}
the matrix $B$ has the eigenvalues $(c_1(\lambda_i)/c_0(\lambda_i))^{1/2}=(f_{1}(\lambda_i)/f_{0}(\lambda_i))^{1/2}$, $i=1,\ldots,n$.
Since~$B$ is a normal matrix (recall that~$S_n$ is symmetric positive definite),
we find that $BB^T$ has the eigenvalues $c_{1}(\lambda_i)/c_{0}(\lambda_i)$, $i=1,\ldots,n$. 
Consequently,
\begin{align*}
	&T_n\ovunset{=}{\mathcal{D}}{H_1} \bY_n^TB B^T\bY_n\ovunset{=}{\mathcal{D}}{H_1}\bY_n^T D_n(f_1/f_0) \bY_n\ls \bY_n^T\bY_n\,\max_{1\ls i\ls n}\frac{c_{1}(\lambda_i)}{c_{0}(\lambda_i)}  <\bY_n^T\bY_n\,\max_{1\ls i\ls n}\frac{c
_{0}(\lambda_i)-\epsilon}{c_{0}(\lambda_i)}\\
&\ls  \left(1-\frac{\epsilon}{a}\right)\, \bY_n^T\bY_n.
\end{align*}
Consequently,
\begin{align*}
	&\E_{H_1}(1-\phi_n)=P_{c_1}\left(T_n \gs nx\right)\ls P_{c_1}\left( \bY_n^T\bY_n\gs \frac{nx}{1-\eps/a} \right).
\end{align*}
Analogously to the proof under the null hypothesis, we get for  any $z<1/2$ 
\begin{align*}
	&P_{c_1}\left(\bY_n^T\bY_n\gs \frac{n x}{1-\eps/a}\right)\ls e^{-n\left( \frac{zx}{1-\eps/a}+\frac 1 2 \log(1-2z) \right)}.
\end{align*}
Now $g_2(z)=zy+\frac 1 2 \log(1-2z)$, $y=x/(1-\eps/a)$, attains its maximum at $z_2=\frac{y-1}{2y}<\frac 1 2$ with 
\begin{align*}
	g_2(z_2)=\frac{y-1}{2}-\frac 1 2 \log y>0
\end{align*}
as $\log y<y-1$ for $y\neq 1$, i.e.\ $x\neq 1-\eps/a$.
Setting $\beta_2=g_2(z_2)>0$ yields $\E_{H_1}(1-\phi_n)\ls e^{-\beta_2 n}$,
completing the proof.
\end{proof}

\begin{proof}[of Theorem~\ref{theorem:consistency-corrected}]
  We follow \citet{Choudhuri04}, proof of Theorem 1, and show (C1) and
  (C2) of Theorem~\ref{theorem:consistency-general} above.  Let
  $p_n(\bZ_n|c)$ denote the corrected likelihood and define
  $a=||c_{0,\eta}||_{\infty}=\sup_{\lambda\in
    [0,\pi]}|c_{0,\eta}(\lambda)|$ and $b=\inf_{\lambda\in[0,\pi]}
  |c_{0,\eta}(\lambda)|$.  An analogous argument as in Appendix
  B.1. of \citet{Choudhuri04} shows that for all $\epsilon>0$ the set
  $\{ c: ||c-c_{0,\eta}||_\infty<\epsilon\}$ has positive prior
  probability under the Bernstein polynomial prior on $\Theta$.  We
  need to show (C1)(a) and (b).  To this end, let $c\in B$ where
\[B=\{c: ||c-c_{0,\eta}||_\infty <b/2\}.\]
For $c\in B$ we have $c>b/2$.
To prove (C1), first note that
\begin{align*}
		\log \frac{p_n(\bZ_n|c_{0,\eta})}{p_n(\bZ_n|c)}&=\frac 1 2 \sum_{i=1}^n\log \frac{f(\lambda_i)}{f_0(\lambda_i)}-\frac 1 2 \bZ_n^TF_n^TD_n^{-1/2}(f_0)S_n^{-1}D_n^{-1/2}(f_0)F_n\bZ_n\\
		&\quad+\frac 1 2 \bZ_n^TF_n^TD_n^{-1/2}(f)S_n^{-1}D_n^{-1/2}(f)F_n\bZ_n
\end{align*}
where
$S_n=D_n^{-1/2}(f_{\text{param}})F_n\Sigma_{n,\text{param}}F_n^TD_n
(f_{\text{param}})^{-1/2}$.  For $\bY_n\deq N(0,\Sigma)$ it holds $\E
\bY_n^TA\bY_n=\mbox{tr}(A\Sigma)$ as well as $\var(\bY_n^T
A\bY_n)=2\mbox{tr}(A\Sigma A\Sigma)$. Furthermore, by analogous
argument as in the proof of Lemma~\ref{lemma_b3} we get that the
eigenvalues of
\[
  F_n^TD_n(f)^{-1/2}S_n^{-1}D_n(f_0/f)^{1/2}S_nD_n(f_0)^{1/2}F_n
\]
are given by
$f_0(\lambda_i)/f(\lambda_i)=c_{0,\eta}(\lambda_i)/c(\lambda_i)$,
$i=1,\ldots,n$, hence
\begin{align*}
	\mbox{tr}(F_n^TD_n(f)^{-1/2}S_n^{-1}D_n(f_0/f)^{1/2}S_nD_n(f_0)^{1/2}F_n)=\sum_{i=1}^n\frac{c_{0,\eta}(\lambda_i)}{c(\lambda_i)}.
\end{align*}
As a consequence, we get 
\begin{align*}
	0&\ls \frac 1 n K_n(c_0,c)=\frac 1 {2n} \sum_{i=1}^n\log \frac{f(\lambda_i)}{f_0(\lambda_i)}-\frac 1 2 +\frac 1 {2n} \mbox{tr}(F_n^TD_n(f)^{-1/2}S_n^{-1}D_n(f_0/f)^{1/2}S_nD_n(f_0)^{1/2}F_n)\\
	&=\frac 1 {2 n} \sum_{i=1}^n\log\left( \frac{c(\lambda_i)-c_{0,\eta}(\lambda_i)}{c_{0,\eta}(\lambda_i)}+1 \right)+\frac 1{ 2n} \sum_{i=1}^n\left(\frac{c_{0,\eta}(\lambda_i)-c(\lambda_i)}{c(\lambda_i)}+1\right)-\frac 1 2\\
	&=O(1)\;  \|c-c_{0,\eta}\|_{\infty}.
\end{align*}
Similar arguments yield
\begin{align*}
&0\ls \frac 1 {n^2} V_n(c_{0,\eta},c)=\frac 1 {2n^2} \mbox{tr}\left[\left( I_n-F_n^TD_n(f)^{-1/2}S_n^{-1}D_n^{1/2}(f_0/f)S_nD_n(f_0)^{1/2}F_n) \right)^2\right]\\
&=\frac{1}{2n^2}\sum_{i=1}^n\left( \frac{c_{0,\eta}(\lambda_i)-c(\lambda_i)}{c(\lambda_i)} \right)^2   =O(1/n)\;\|c-c_{0,\eta}\|_{\infty}^2
\to 0.
\end{align*}
The proof can now be concluded as in \citet{Choudhuri04} by replacing
Lemma B.3 by Lemma~\ref{lemma_b3} above.
\end{proof}

\section{Appendix: Bayesian autoregressive sampler}\label{sec_appendixSampling}
For fixed order~$p \geq 0$, the autoregressive
model~$Z_t=\sum_{l=1}^pa_lZ_{t-l}+e_t$ with i.i.d. N($0,\sigma^2$)
random variables is parametrized by the the innovation
variance~$\sigma^2$ and the partial autocorrelation structure~$\mathbf
\brho = (\rho_1,\ldots,\rho_p)$, where $\rho_k$ is the conditional
correlation between~$Z_t$ and~$Z_{t+k}$
given~$Z_{t+1},\ldots,Z_{t+k-1}$.  Note that~$\mathbf \brho \in
(-1,1)^p$ and that there is a one-to-one relation between~$\mathbf
a=(a_1,\ldots,a_p)$ and~$\brho$.  To elaborate, we follow
\citet{barndorff1973parametrization} and introduce the auxiliary
variables~$\phi_{k,l}$ as solutions of the following Yule-Walker-type
equation with the autocorrelations $r_k = \E[Z_tZ_{t+k}]/\E Z_1^2$:
\[
  \begin{pmatrix}
    1 & r_1 & \ldots & r_{k-1} \\
    r_1 & 1 & \ldots & r_{k-2} \\
    \vdots & \vdots & \ddots & \vdots \\
    r_{k-1} & r_{k-2} & \ldots & 1
  \end{pmatrix}
  \begin{pmatrix}
    \phi_{k,1} \\
    \phi_{k,2} \\
    \vdots \\
    \phi_{k,k}
  \end{pmatrix}
  =
  \begin{pmatrix}
    r_1\\
    r_2\\
    \vdots\\
    r_k
  \end{pmatrix}.
\]
As shown in~\citet{barndorff1973parametrization}, the well-known relationships
\begin{align*}
  \phi_{p,l} &= a_l,    \quad l=1,\ldots,p, \\
  \phi_{k,k} &= \rho_k, \quad k=1,\ldots,p
\end{align*}
readily imply~$a_p = \rho_p$ and (recall that~$\rho_k = 0$ for~$k > p$) the recursive formula
\[
  a_l = \rho_l - \rho_{l+1} \phi_{l,1} - \rho_{l+2}\phi_{l+1,2} - \ldots - \rho_p \phi_{p-1,p-l}, 
  \quad l=1,\ldots,p-1.
\]
We specify the following prior assumptions for the model parameters:
\begin{align*}
  \rho_l &\sim \text{uniform}((-1,1)), \quad 1 \leq l \leq p, \\
  \sigma^2 & \sim \text{inverse-gamma}(\alpha_\sigma, \beta_\sigma),
\end{align*}
all a priori independent.
We use~$\alpha_\sigma = \beta_\sigma = 0.001$.
Furthermore, we employ the Gaussian likelihood
\begin{align*}
  p(\mathbf Z_n | \mathbf \rho, \sigma^2)
  &= \det \Sigma_{p,\mathbf a, \sigma^2}^{-1/2} 
  \exp\left\{ -\frac{1}{2} \Z_p^T \Sigma_{p,\mathbf a,\sigma^2}^{-1} \Z_p \right\}
 \prod_{i=p+1}^n  \phi_\sigma\left(  Z_i - \sum_{l=1}^p a_l Z_{i-l}  \right)
\end{align*}
with the N($0,\sigma^2$) density~$\phi_\sigma(\cdot)$ and the $p
\times p$ autocovariance matrix $\Sigma_{p,\mathbf a(\mathbf \rho)}$
of the autoregressive model.  To draw samples from the corresponding
posterior distribution, we use a Gibbs sampler, where the conjugate
full conditional for~$\sigma^2$ is readily sampled.
For~$j=1,\ldots,p$ the full conditional for~$\rho_j$ is sampled with
the Metropolis algorithm using a normal random walk proposal density
with proposal variance~$\sigma_l^2$.  As for the parameters of the
working model within the corrected parametric approach (see
Section~\ref{sec_priorParametric}), the proposal variances are
adjusted adaptively during the burn-in period, aiming for a respective
acceptance rate of~0.44.

\section{Appendix: Further simulation results}\label{sec_appendixSims}
Table~2 
depicts further results of the AR, NP and NPC procedures
for normal ARMA data.

\begin{table}
\caption{Average Integrated Absolute Error (aIAE), Uniform Credible Interval coverage (cUCI) and
average posterior model confidence~$\hat \eta$ for
different realizations of model~\eqref{eq_model_ar1}.}\label{tab_iae2}
\centering
\begin{adjustbox}{max width=\linewidth}
\fbox{
\begin{tabular}{l|rrrlrrrlrrr}
&          \multicolumn{3}{c}{White Noise: $a=b=0$}         & & \multicolumn{3}{c}{MA($2$): $b_1=0.75$, $b_2=-0.5$}        & & \multicolumn{3}{c}{AR($2$): $a_1=0.75$, $a_2=-0.5$} \\
\cline{2-4}\cline{6-8}\cline{10-12}
Methods&   $n=64$  &  $n=128$  &  $n=256$    & & $n=64$  &  $n=128$  &  $n=256$   & & $n=64$  &  $n=128$ & $n=256$  \\
\hline 
aIAE &&&&&&&&&&&\\
AR&       0.095  &  0.074   &  0.053     & & 0.343  &  	0.278   &  0.221  & & 0.270  &  0.201  & 0.150  \\
NP&       0.102  &  0.066   &  0.042     & & 0.233  &  	0.177   &  0.128  & & 0.314  &  0.257  & 0.207  \\
NPC&      0.104  &  0.070   &  0.047     & & 0.236  &  	0.180   &  0.137  & & 0.276  &  0.211  & 0.163  \\
\hline
cUCI &&&&&&&&&&& \\
AR&       1.000  &  0.999   &  0.999     & & 0.771  &  	0.757   &  0.786  & & 0.996  &  1.000  & 1.000  \\
NP&       1.000  &  1.000   &  1.000     & & 0.999  &  	0.999   &  0.992  & & 0.998  &  0.992  & 0.982  \\
NPC&      1.000  &  1.000   &  1.000     & & 1.000  &  	1.000   &  1.000  & & 0.999  &  1.000  & 1.000  \\
\hline
$\hat\eta$&0.368 &  0.325   &  0.293     & & 0.293  &   0.199   &  0.123  & & 0.461  &  0.529  & 0.614
\end{tabular}}
\end{adjustbox}
\end{table}

For white noise, all procedures yield good results,
whereas NP is superior due to the implicitly well-specified
white noise working model (recall that the order~$p$ within
AR and NPC are estimated with the DIC).
The results for MA(2) data are qualitatively similar to the results
for MA(1) data in Table~1. 
The spectral peaks of the AR(2) model are not as strong as the
spectral peak of the AR(1) model considered in Table~1. 
Accordingly, it can be seen that the NP results are better in this case.
The NPC benefits again from the well-specified parametric model,
yielding results that are comparable to the AR procedure.

\end{document}